\newtheorem{lemma}{Lemma}
\newtheorem{theorem}{Theorem}
\begin{document}


\title{Hierarchical Online-Scheduling for Energy-Efficient Split Inference with Progressive Transmission}


\author{Zengzipeng Tang,~\IEEEmembership{Student Member,~IEEE,} Yuxuan Sun,~\IEEEmembership{Member,~IEEE,} Wei Chen,~\IEEEmembership{Senior Member,~IEEE,}
Jianwen Ding, Bo Ai,~\IEEEmembership{Fellow,~IEEE} and Yulin Shao,~\IEEEmembership{Member,~IEEE}
\thanks{Zengzipeng Tang, Yuxuan Sun (Corresponding Author), Wei Chen, Jianwen Ding and Bo Ai are with the School of Electronic and Information Engineering, Beijing Jiaotong University, Beijing 100044, China. (e-mail: \{zzptang, yxsun, weich, jwding, boai\}@bjtu.edu.cn)}
\thanks{Yulin Shao is with the Department of Electrical and Electronic Engineering, The University of Hong Kong (e-mail: ylshao@hku.hk).}
}

\markboth{Journal of \LaTeX\ Class Files,~Vol.~14, No.~8, August~2021}%
{Shell \MakeLowercase{\textit{et al.}}: A Sample Article Using IEEEtran.cls for IEEE Journals}


\maketitle

\begin{abstract}
Device-edge collaborative inference with Deep Neural Networks (DNNs) faces fundamental trade-offs among accuracy, latency and energy consumption. Current scheduling exhibits two drawbacks: a granularity mismatch between coarse, task-level decisions and fine-grained, packet-level channel dynamics, and insufficient awareness of per-task complexity. Consequently, scheduling solely at the task level leads to inefficient resource utilization.
This paper proposes a novel ENergy-ACcuracy Hierarchical optimization framework for split Inference, named ENACHI, that jointly optimizes task- and packet-level scheduling to maximize accuracy under energy and delay constraints. A two-tier Lyapunov-based framework is developed for ENACHI, with a progressive transmission technique further integrated to enhance adaptivity. At the task level, an outer drift-plus-penalty loop makes online decisions for DNN partitioning and bandwidth allocation, and establishes a reference power budget to manage the long-term energy-accuracy trade-off. At the packet level, an uncertainty-aware progressive transmission mechanism is employed to adaptively manage per-sample task complexity. This is integrated with a nested inner control loop implementing a novel reference-tracking policy, which dynamically adjusts per-slot transmit power to adapt to fluctuating channel conditions. Experiments on ImageNet dataset demonstrate that ENACHI outperforms state-of-the-art benchmarks under varying deadlines and bandwidths, achieving a 43.12\% gain in inference accuracy with a 62.13\% reduction in energy consumption under stringent deadlines, and exhibits high scalability by maintaining stable energy consumption in congested multi-user scenarios. 
\end{abstract}

\begin{IEEEkeywords}
Edge AI, split inference, resource allocation, energy efficiency, Lyapunov optimization
\end{IEEEkeywords}

\section{Introduction}
\IEEEPARstart{T}{he} deployment of Deep Neural Networks (DNNs) on smart mobile devices is driving a fundamental shift toward edge Artificial Intelligence (AI), essential for supporting real-time inference in applications like augmented reality (AR), virtual reality (VR), and advanced perception networks for the Industrial Internet of Things (IIoT). These applications involve continuous AI task generation and require high-throughput visual perception, creating a significant demand for on-device, context-aware intelligence \cite{edge3.5}. Concurrently, the rise of paradigms like Integrated Sensing and Edge AI (ISEA) \cite{isea} further intensifies this trend, demanding an unprecedented level of tight coupling between environmental sensing, real-time computation, and ultra-reliable communication to deliver context-aware intelligence \cite{edge4}, 
which is increasingly viewed as a key enabler for 6G intelligent perception.
This trend, however, faces a core conflict with the physical limitations of edge devices. The pursuit of state-of-the-art performance has led to an exponential increase in model size and computational complexity \cite{2019}. This creates a major energy challenge for mobile devices \cite{conflicts}, which often have severe constraints on battery capacity and processing power, hindering their ability to independently run large-scale AI models.



To tackle this, the device-edge collaborative paradigm of \textit{split inference}, enabled by Edge Intelligence (EI) \cite{edgeai}, has emerged as a promising solution. By partitioning a DNN between the device and the edge server, this architecture offloads computation-intensive layers to the edge, thereby alleviating the processing load of the device. However, it does not fundamentally resolve the energy and latency challenges, as both local computation and wireless transmission remain power-consuming. Moreover, task success is often constrained by strict real-time requirements. In latency-sensitive applications such as AR and industrial automation \cite{IMT}, inference results lose their utility once deadlines are missed, turning latency from a soft performance metric into a hard Quality-of-Service (QoS) constraint.

Optimizing inference accuracy and energy performance for deadline-critical AI tasks is highly challenging. Key operational decisions, such as where to partition the DNN, how to allocate wireless resources, and how much intermediate feature data to transmit, create a \emph{complex interplay} between inference accuracy, end-to-end latency, and device energy consumption. 
An attempt to improve one metric, such as increasing accuracy by sending more data, can negatively impact others by increasing energy use and potentially violating the task deadline. 
Compounding this difficulty is the \emph{non-analytical} nature of DNNs, where the specific mapping of inference accuracy and energy consumption lacks a closed-form expression.
Such inherent conflicts of tight coupling and model intractability render the optimization of the energy-accuracy trade-off exceptionally difficult. This challenge is further magnified in a multi-user EI system, where multiple devices must simultaneously compete for limited wireless bandwidth and shared edge-server computational resources. 
Recent studies have explored joint optimization frameworks to balance latency, energy consumption, and QoS in edge inference. These frameworks build upon pioneering research that established task offloading through online resource optimization \cite{p1,p2,p21} and strategies for model pruning or partitioning to reduce computation and communication costs \cite{p4, p3}. Motivated by the pressing demand for energy efficiency in green edge AI \cite{green}, representative efforts include dynamic partitioning schemes with early exits via reinforcement learning \cite{c2} and Lyapunov-based methods for long-term energy and latency management through joint partitioning and resource allocation \cite{c3}. Further advancements investigate semantic communication-based offloading \cite{c4} and layer-level DNN partitioning with distributed game-theoretic optimization \cite{c5}.
In multi-user scenarios, batching strategies are widely employed to enhance server throughput \cite{batching}, and similar batch-processing techniques are utilized to reduce collective energy overheads \cite{shi}. To address the challenge of performance estimation, \cite{han} focuses on performance modeling for energy-aware scheduling. In video analytics and AR, researchers model the relationship between transmitted data and inference accuracy using curve fitting or configuration adaptation to address the non-analytical nature of DNNs\cite{29, 23, c1}.

Recent advancements further broaden the optimization scope to environmental dynamics and new system paradigms. Considering the dynamic nature of edge environments, robust optimization is introduced to address uncertainties in computing capacity \cite{nan1} and inference delay \cite{nan2}. In the emerging paradigm of ISEA, research targets ultra-low-latency edge inference for distributed sensing \cite{point1ms}. Complementary studies also investigate event-triggered offloading strategies to effectively mitigate communication overheads \cite{zhoua}. Furthermore, hierarchical scheduling is employed to separates long-term energy provisioning from short-term task offloading decisions in multi-tier edge systems \cite{new1}. Similarly, a two-timescale approach in satellite edge computing \cite{new2} decouples slow service deployment optimization from fast task scheduling.

Despite notable progress in comprehensive resource and quality management, contemporary split-inference frameworks still face two critical challenges:

(1) \textit{Mismatch in scheduling granularity}.
The high dimensionality of intermediate representations in modern DNNs, such as the 256 × 56 × 56 feature maps in ResNet, totaling over 8 million real-valued parameters \cite{parameter}, creates a severe communication bottleneck over bandwidth-limited wireless links. To alleviate this issue, a viable approach is to employ a \textit{progressive feature transmission} strategy\cite{progressive}, wherein intermediate feature representations are transmitted in multiple adaptive stages. At each stage, the edge server selectively receives the most informative feature dimensions and decides whether to continue transmission based on feedback about accuracy improvement or channel quality. This fine-grained, feedback-driven mechanism significantly improves transmission efficiency and enables slot-wise adaptation to dynamic wireless conditions, which naturally aligns with the fine-grained, slot-by-slot nature of wireless resource allocation. While the mechanism improves communication efficiency, integrating such packet-level transmission mechanism with macro task-level resource allocation remains difficult, revealing a fundamental mismatch in scheduling granularity.

(2) \textit{Lack of task-aware adaptation}. Most existing split-inference frameworks adopt uniform offloading strategies that treat all inference tasks equally, neglecting the inherent heterogeneity among them. In practice, tasks often differ in importance, input complexity, and feature representation sparsity, resulting in distinct transmission priorities and accuracy-energy trade-offs. For instance, tasks with higher semantic importance or denser features may require deeper layer transmission or finer feature granularity, whereas simpler tasks can be satisfied with partial feature subsets \cite{guolei3,guolei}. Without explicitly modeling such task-specific variations, current frameworks fail to allocate communication and computation resources efficiently across heterogeneous workloads. This lack of task-aware adaptation ultimately restricts both inference accuracy and energy efficiency in dynamic edge environments.

In this paper, we consider a multi-user EI system, where users collaborate with the edge server for split inference tasks. To cope with the constraint of scarce wireless and local computation resources, we propose a novel ENergy-ACcuracy Hierarchical optimization for split-Inference framework, named ENACHI. We formulate an optimization problem to maximize the aggregate inference accuracy of all users, while ensuring energy consumption stability and satisfying hard deadline constraints. 
The framework uses a two-tier Lyapunov architecture, integrating progressive transmission to handle varying task complexity within the stochastic optimization. At the task level, an outer loop decides DNN partitioning and bandwidth allocation while setting a reference power budget. At the packet level, a nested inner loop employs an uncertainty-aware progressive transmission mechanism, integrated with a reference-tracking policy to adjust per-slot transmit power for coping with channel fluctuations.

The main contributions are summarized as follows:

\begin{itemize}

    \item We propose a novel hierarchical optimization framework ENACHI to address the joint optimization of long-term energy consumption and inference accuracy  in deadline-critical split inference, which coordinates task-level scheduling and packet-level transmission by employing a Lyapunov-based \textit{nested drift-plus-penalty} method and integrates adaptive progressive transmission with batched inference.

    \item To address the intracable nature of long-term stochastic problem and the non-analytical and non-convex inference accuracy objective, we develop a \textit{tractable surrogate model}, which enables efficient online task-level scheduling while providing \textit{theoretical performance guarantees} for long-term system stability and performance.

    \item We design a nested inner control loop to solve the scheduling granularity mismatch between task-level decisions and packet-level channel dynamics, which implements a \textit{task-aware}, \textit{reference-tracking policy} that tracks the power budget from task-level to ensure consistency, while adapting to channel conditions and task complexity.

    \item We conduct extensive simulations on the ImageNet dataset\cite{ima} to validate our framework. The results demonstrate that ENACHI outperforms benchmarks by achieving a favorable energy-accuracy trade-off and robust scalability, especially under stringent deadlines and in congested multi-user scenarios.
\end{itemize}


\section{System Model}\label{Section:system model}
We consider a device-edge collaborative edge AI system, as illustrated in Fig.~\ref{fig:system},  where $n$ computation-limited devices collected in a set of $\mathcal{N}=\{1,...,N\}$, cooperate with an edge server to execute DNN-based inference tasks. For flexibility, pre-trained DNN models of different types are cached in both the devices and the edge server. 

\begin{figure*}[htbp!]
    \centering
    \includegraphics[width=0.94\textwidth]{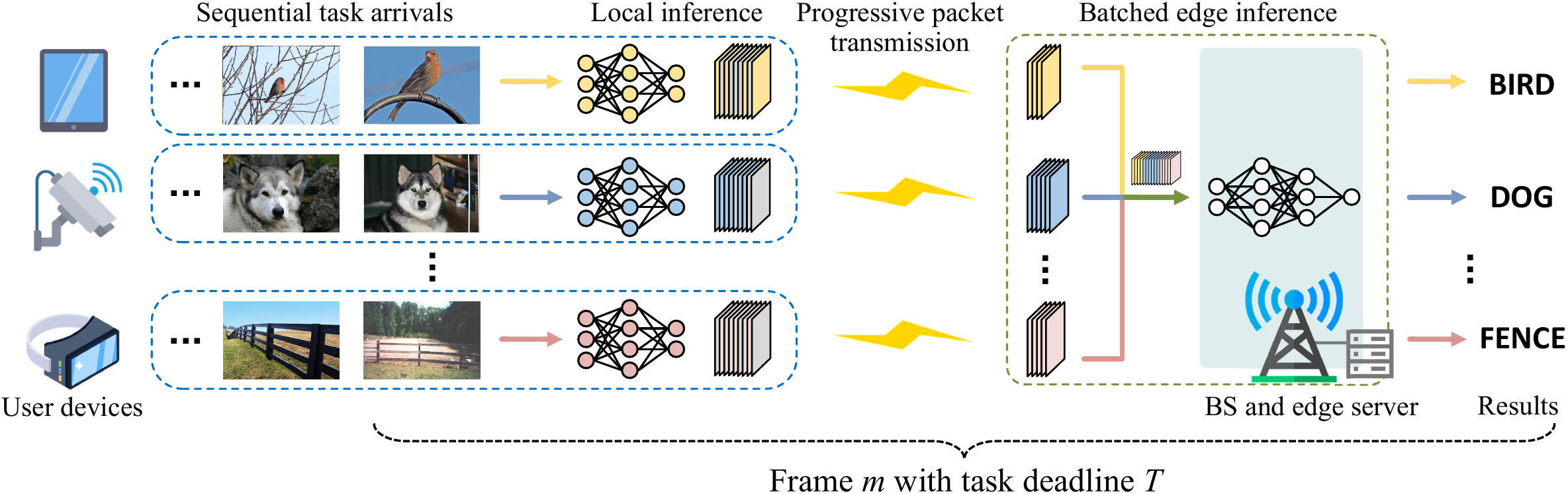} 
    \caption{Illustration of the multi-user split inference system. }
    \label{fig:system}
    \vspace{-0.2cm}
\end{figure*}

The system time is structured into discrete frames. Each task arrives at the beginning of frame $m \in \mathcal{M}$, where $\mathcal{M} = \{1, 2, \dots, M\}$, with a deadline $T$. Typical use cases include AR, where sequential image or video frames necessitate periodic processing to overlay digital information, or industrial IoT, where automated inspections must be performed on items on a production line at regular intervals. 
Consistent with these dedicated application scenarios, we assume all users execute tasks of a homogeneous type, i.e., leveraging an identical DNN architecture, throughout the entire operational period. Within each frame, a device may perform inference locally or engage in collaborative processing with the edge server. 
\subsection{Split Inference Model} \label{iia}
The DNN model of task in frame $m$ consists of $k_m$ sequential layers. For device $n$, when collaborative inference is performed, the model is partitioned at a layer $s_{n,m}$ selected from a predefined set of feasible partition points, $\mathcal{S}=\{0, 1, 2, \dots, k_m\}$, such that layers $0$ to $s_{n,m}$ are executed locally on the device, while layers $s_{n,m}+1$ through $k_m$ are offloaded and executed at the edge server. Specifically, $s_{n,m}=0$ indicates that the entire task is fully offloaded to the edge server for execution, whereas $s_{n,m}=k_m$ corresponds to fully local execution without offloading. 

The computational complexity of AI inference tasks is quantified in terms of the number of multiply-accumulate operations (MACs) \cite{c1}. For a given task in frame $m$ with partition point $s_{n,m}$, the total task workload is denoted by $R_{m}$ (in MACs), and the number of intermediate feature maps is represented as $b_\text{total}(s_{n,m})$.  Typically, the size of the final output is far smaller than the input or intermediate features and can thus be neglected in transmission considerations.\cite{neglect}.

We denote $f_{n,m}$ as the local computing ability of the $n$-th user device during frame $m$. The correspondingly local computing delay on frame $m$ can be written as:
\begin{equation}
    \setlength\abovedisplayskip{5pt}
    \setlength\belowdisplayskip{5pt}
    t_{n,m}^\text{local} = \frac {R_{s_{n,m}}^\text{local}}{f_{n,m}},
\end{equation}
where $R_{s_{n,m}}^\text{local}$ represents the local computing workload of device $n$ when choosing partition point $s_{n,m}$. The corresponding energy consumption is 
\begin{equation}
    \setlength\abovedisplayskip{5pt}
    \setlength\belowdisplayskip{5pt}
    E_{n,m}^\text{local}=\alpha_{n} f_{n,m}^3 t_{n,m}^\text{local} = \alpha_{n} f_{n,m}^2 R_{s_{n,m}}^\text{local},
\end{equation}
where $\alpha_{n}$ is a coefficient determined by the corresponding device chip architecture \cite{chip}.

After completing the local computation portion of the inference task, each element of feature maps is quantized with a sufficiently high resolution of $D$ bits, rendering quantization errors negligible. 
We divide task transmission time into time slots of duration $t_\text{slot}$ seconds (typically 1ms), which represent the typical scheduling interval in contemporary wireless communication systems \cite{3gpp}, and the wireless channel is assumed to remain constant within each time slot.
For each frame $m$, device $n$ is allocated a dedicated narrowband channel with an allocated bandwidth of $\omega_{n,m}$ that remains constant across all slots. Correspondingly, its achievable transmission rate in slot $k$ is given by the Shannon capacity formula:
\begin{equation}
    \setlength\abovedisplayskip{5pt}
    \setlength\belowdisplayskip{5pt}
    r_{n,m,k} = \omega_{n,m} \log_2\left(1+\frac{h_{n,m,k}p_{n,m,k}}{N_0\omega_{n,m}}\right), \label{rate}
\end{equation}
where $h_{n,m,k}$ is the channel gain and $p_{n,m,k}$ is the transmit power of user $n$ in slot $k$. Note that the rate $r_{n,m,k}$ still varies across slots because the channel gain and transmit power can change. $N_0$ is the Gaussian noise power spectral density.

We define the transmission unit as a packet and one such packet is transmitted per time slot. This packet serves as a container for one or multiple feature maps, with $b_{n,m,k}$ indicating their exact number. This number is constrained by the channel capacity and the size of an individual feature map, which depends on the partition point $s_{n,m}$. Specifically, a feature map is represented as an $L^\text{h}_s \times L^\text{w}_s$ matrix, therefore the value of $b_{n,m,k}$ is then calculated as the total transmissible bits in a slot divided by the size of a single feature map:
\begin{equation}
    \setlength\abovedisplayskip{5pt}
    \setlength\belowdisplayskip{5pt}
    b_{n,m,k} = \left\lfloor \frac{r_{n,m,k}  t_\text{slot}}{D  L^\text{h}_s  L^\text{w}_s} \right\rfloor. \label{b}
\end{equation}

\subsection{Task-aware Progressive Packet Transmission Model} \label{ppt}
After local computation, the device uploads the intermediate feature maps generated by the local sub-model to the edge server. To avoid inter-device interference, we adopt frequency division multiple access (FDMA) for offloading. Furthermore, both small-scale Rayleigh fading and large-scale path loss are considered in the device–server uplink channel model.

We emphasize that in delay-critical AI inference tasks, the total latency budget is typically insufficient to allow transmission of all intermediate features. Moreover, for a given task type, the inference accuracy varies significantly across different input samples based on their intrinsic complexity. However, from the perspective of the device, it only has \textit{statistical} priors and cannot \textit{deterministically} model this per-sample complexity, and may waste resources on simple samples or fail to achieve sufficient accuracy on complex ones.

To address the challenges and adapt to per-sample complexity, we adopt a progressive packet transmission mechanism with server-side stopping control, as illustrated in Fig. \ref{fig:prog}. For each task, the device sequentially transmits the most informative feature map packets. At the end of each slot, the server evaluates the inference confidence based on the cumulative packets received thus far. The transmission process terminates once this confidence exceeds a predefined threshold.

\begin{figure}[!t]
    \centering
    \includegraphics[width=\columnwidth]{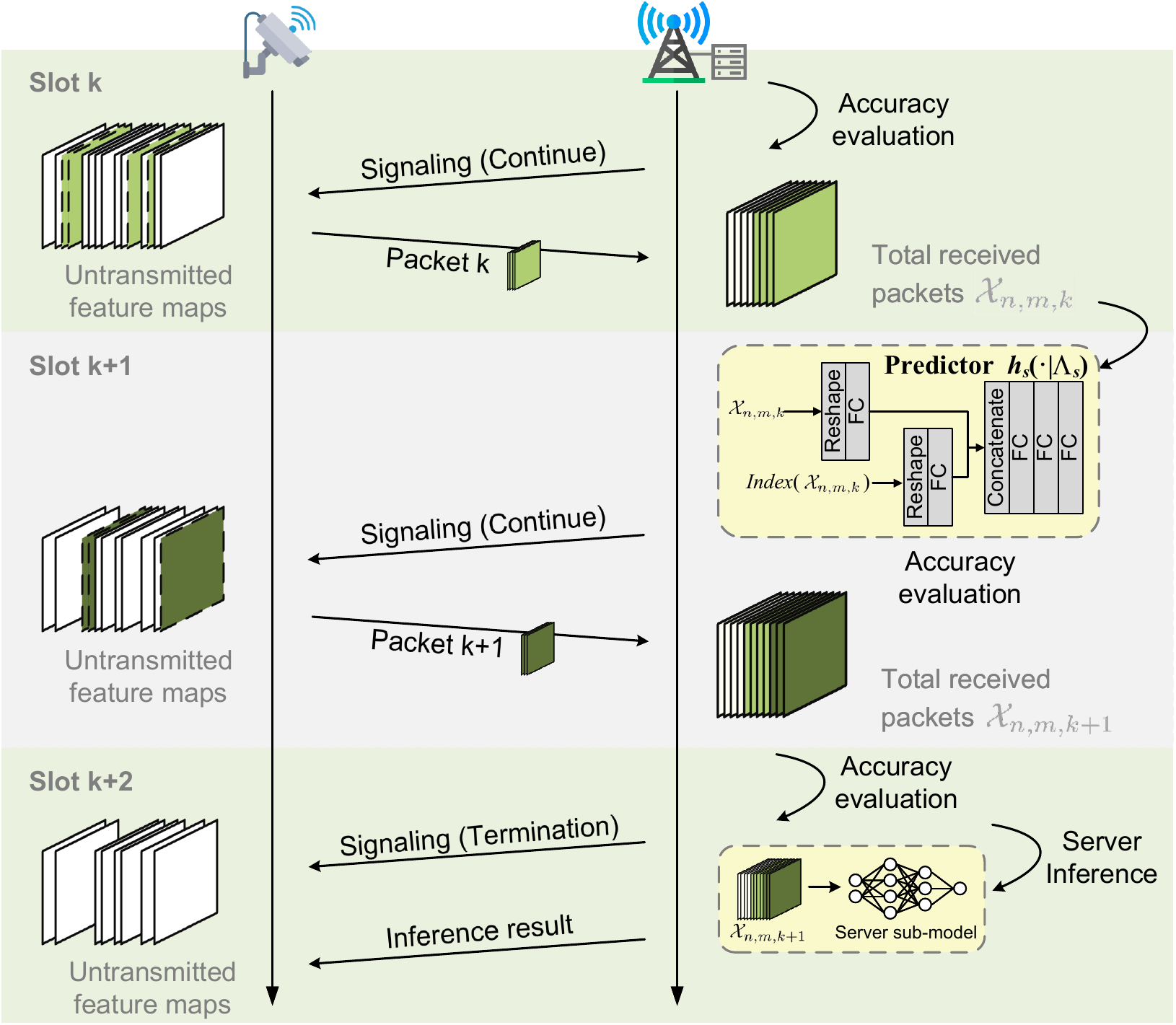} 
    \caption{Illustration of the progressive packet transmission mechanism.}
    \label{fig:prog}
\end{figure}

Following \cite{progressive}, we measure confidence using the predictive uncertainty of the inference. To this end, a lightweight MLP predictor $h_s(\mathcal{X}_{n,m,k}\,|\,\Lambda_s)$ is trained to estimate the uncertainty based on the feature map packets $\mathcal{X}_{n,m,k}$ received up to slot $k$, where $\Lambda_s$ denotes the model parameters. Its runtime is negligible compared to that of the main task inference model. 

The uncertainty is quantified by the predictive entropy:
\begin{equation} \label{eq:entropy}
H(\mathcal{X}_{n,m,k}) \triangleq -\sum_{l=1}^{L} Pr(l|\mathcal{X}_{n,m,k}) \log Pr(l|\mathcal{X}_{n,m,k}),
\end{equation}
where $L$ denotes the number of classes and $Pr(l|\mathcal{X}_{n,m,k})$ is the posterior probability of class $l$. The predictor $h_s$ is trained to approximate $H(\mathcal{X}_{n,m,k})$ by minimizing the loss between its output and the true entropy. The detailed implementation of this predictor will be discussed in Section~\ref{3c}.

We define $k_{n,m}$ as the total number of slots spent on the actual transmission, which corresponds to the frame duration excluding the local and edge computation times.
Accordingly, the communication energy consumption can be expressed as:
\begin{equation} \label{66}
    E_{n,m}^\text{tr}=\sum_{k=1}^{k_{n,m}} p_{n,m,k}t_\text{slot},
\end{equation}
where $p_{n,m,k}$ stands for the transmit power for device $n$ in frame $m$, slot $k$.

Since the edge server is generally powered by a stable power grid, the energy consumption of edge computation is not considered in this work. Therefore, the total energy consumption on frame $m$ for device $n$ is 
\begin{equation}
    E_{n,m} = E_{n,m}^\text{local}+E_{n,m}^\text{tr}= \alpha_n f_{n,m}^2 R_{s_{n,m}}^\text{local} + \sum_{k=1}^{k_{n,m}} p_{n,m,k}t_\text{slot}.
\end{equation}

\subsection{Batched Task Execution Model at the Edge} \label{batch}
Upon reception of feature map packets, the edge server, equipped with cached DNN models, executes the remaining layers of the inference task and may sends the inference results back to the user device. 




The edge server is assumed to have a fixed computing capability $f^\text{edge}_m$ per frame, and the corresponding edge computing delay given the edge computing workload $R_{s_{n,m}}^\text{edge}$ similar to that of the local computing part can be written as:
\begin{equation}
    t_{n,m}^\text{edge} = \frac {R_{s_{n,m}}^\text{edge}}{f^\text{edge}_m}. \label{6}
\end{equation}


To further enhance system efficiency and align with the periodic nature of task arrivals, the edge server adopts a lightweight batching mechanism. This mechanism synchronizes the execution of DNN inference across all $N$ devices within each frame $m$. The key principle is to ensure that edge-side inference for the batch starts only after all devices have completed their feature packets transmission, thereby avoiding server-side idle time and improving computational throughput.

This design is motivated by the common hard deadline $T$, which naturally imposes a shared time constraint for all tasks within the frame. The batch starting time, $t^\text{batch}_m$, is determined by the device with the largest edge inference delay, as this device dictates the latest possible moment the server can begin computation while still meeting all deadlines. Formally, $t^\text{batch}_m$ is expressed as:
\begin{equation} \label{eq:batch_time}
t^\text{batch}_m = t_m^\text{frame} + T - t_{\max}^\text{edge},
\end{equation}
where $t_m^\text{frame}$ denotes the starting time of the current frame, and $t_{\max}^\text{edge} = \max_{n \in \mathcal{N}} t_{n,m}^\text{edge}$ represents the maximum edge computation delay, as defined in (\ref{6}), among all devices, ensuring that the edge inference part of each task can be completed within the delay bound.

This start time $t^\text{batch}_m$ also defines the hard transmission deadline for each device. A task for device $n$ is considered successful if its feature transmission is completed before $t^\text{batch}_m$. Conversely, if additional time beyond $t^\text{batch}_m$ would be required for transmission, i.e., $t_{n,m}^\text{local} + t_{n,m}^\text{tr} > t^\text{batch}_m - t_m^\text{frame}$, 
the process is forcibly terminated. Consequently, the edge server performs inference using only the received features, inevitably resulting in degraded accuracy or wasted energy resources.

\subsection{Inference Task Accuracy Model}
For a given task $n$ in frame $m$ and its corresponding DNN partition point $s_{n,m}$, the inference accuracy depends mainly on the number of received feature maps $\mathcal{X}_{n,m,k}$.
According to \cite{29,23}, more  $\mathcal{X}_{n,m,k}$ generally leads to better inference accuracy, and the incremental gain diminishes as $\mathcal{X}_{n,m,k}$ increases. 
Therefore, by defining \(\mathrm{A}_{n,m}(s_{n,m},\mathcal{X}_{n,m,k})\) as a monotonically non-decreasing function, the average accuracy for all tasks in frame \(m\) can be expressed as:
\begin{equation}
    \mathrm{A}_m = \frac{1}{N}\sum_{n\in\mathcal{N}}\mathrm{A}_{n,m}(s_{n,m}, \mathcal{X}_{n,m,k}).
\end{equation}

\subsection{Problem Formulation} \label{opf}

We aim to maximize the long-term average inference accuracy of all devices, while ensuring that long-term average local energy consumption of each device remains below a prescribed threshold $\bar{E}_n$:
\begin{subequations}
    \begin{align}
    \mathcal{P}1:\max _{\boldsymbol{s}_m,\boldsymbol{\omega}_m,\boldsymbol{p}_m} &\,\, \frac{1}{M} \sum_{m=1}^M \mathrm{A}_{m}  \label{p1a}\\
    \text { s.t. }\,\,\,\, &\,\,\frac{1}{M} \sum_{m=1}^M E_{n,m} \leq \bar{E}_n,\forall n \in \mathcal{N}, \label{energyc}\\
    &\,\,\sum_{n=1}^{N}\omega_{n,m} \le \omega , \forall m \in \mathcal{M},  \label{cc}\\ 
    &\,\,~s_{n,m} \in\{0,1,2,\cdots,k_{m}\}, \label{dd} \\ 
    &\,\,~{p}_{n,m} \in (0,p_{\max}], \label{ee}
    \end{align}
\end{subequations}
where the optimization variables are the DNN partition point $\boldsymbol{s}_m=[s_{1,m},...,s_{N,m}]$, the bandwidth allocated to each device $\boldsymbol{\omega}_m=[\omega_{1,m},...,\omega_{N,m}]$ and the transmit power $\boldsymbol{p}_m=[p_{1,m},...,p_{N,m}]$. In the above problem, (\ref{p1a}) is the optimization goal of maximizing the long-term task inference accuracy, 
(\ref{energyc}) is the long-term energy constraint ensuring that the computing and transmission energy of each task remains within a stable range, 
and constraints (\ref{cc})-(\ref{ee}) are used to constrain the range of the optimization variables, where $\omega$ is the total bandwidth allocated to the system in a frame. Given the complexity of this mixed-integer nonlinear problem, we omit adaptive CPU frequency scaling and concentrate on optimizing the communication resources. Recall that in system model we have assumed that feature transmission time is predefined to be the remaining part of task deadline excluding local and edge computing delays, so here we no longer list the delay limit requirements.

Problem $\mathcal{P}1$ is essentially a \textit{stochastic optimization problem}. Its intractability arises from several coupled challenges: 
(1) \textit{The problem is inherently stochastic}: the system aims to optimize long-term inference accuracy and energy efficiency, but at the beginning of each frame only the instantaneous channel state is observable, while future channel conditions are unknown and difficult to predict, making direct offline optimization extremely challenging.
(2) \textit{The inference accuracy function is non-analytical}: although accuracy is known to be non-decreasing with respect to the number of received feature maps, its exact deterministic expression is unavailable, preventing direct use in analytical optimization.
(3) \textit{Some decision variables are integer-constrained}: for example, the selection of partition points $s_{n,m}$ is inherently discrete. When combined with the discrete local computational costs determined by the DNN architecture, this results in a non-convex feasible region and combinatorial complexity.
Overall, problem $\mathcal{P}1$ is a non-convex mixed-integer stochastic
optimization problem, which is generally complex to solve.

\section{ENACHI: Energy-Accuracy Hierarchical Optimization Algorithm for Split Inference}

In this section, we propose ENACHI to solve Problem $\mathcal{P}1$. The framework is built upon a \textit{nested drift-plus-penalty} architecture that operates at two granularities. At the task-level, an outer loop transforms the long-term stochastic problem into a per-frame optimization. To address the non-analytical accuracy objective, this loop employs a tractable surrogate model derived from statistical fitting. At the packet-level, a nested inner loop is designed to mitigate sample fluctuations, which works in conjunction with progressive packet transmission to dynamically adjust per-slot transmission policies, ensuring adherence to the  reference budget set at the task-level.

\subsection{Problem Conversion}

To make the stochastic optimization problem tracable, this section first decomposes the original problem at the task level by investigating the long-term stochastic dynamics using Lyapunov optimization. This approach establishes virtual queues that capture long-term constraints and decomposes them into per-frame decisions, thereby decoupling the problem from a temporal perspective. Under the drift-plus-penalty framework \cite{40}\cite{41}, the system stabilizes the virtual queues while minimizing the penalty, converting the long-term stochastic optimization into an online decision process that ensures system stability and sustained performance improvement.

We first create a virtual energy queue $Q_{n,m} \geq 0$ for each device $n$ to track the long-term energy constraint (\ref{energyc}). The actual energy consumption $E_{n,m}$ is treated as the per-frame \textit{arrival} to the queue, while the energy budget $\bar{E}_n$ is the constant \textit{service} rate. The queue backlog $Q_{n,m}$ thus represents the cumulative energy deficit relative to the budget. The queue dynamics are expressed as:
\begin{equation}
    Q_{n,m+1} = \left[Q_{n,m}+(E_{n,m} - \bar{E}_n)\right]^{+},
\end{equation}
where $[x]^+$ is defined as $\max\{x,0\}$, and the queue is initialized with $Q_{n,1}=0$. The queue accumulates the excess energy consumption whenever the per-frame energy usage $E_{n,m}$ exceeds the threshold $\bar{E}_n$, and decreases otherwise. We denote the queue vector at frame $m$ as of all users by $\boldsymbol{Q}_m = [Q_{1,m}, \ldots, Q_{N,m}]$.





\begin{lemma}
Using Lyapunov drift-plus-penalty framework, the long-term stochastic optimization problem $\mathcal{P}1$ can be transformed into the online, per-task optimization problem $\mathcal{P}1.1$:
\begin{align}
    \mathcal{P}1.1:\max _{\boldsymbol{s}_m,\boldsymbol{\omega}_m,\boldsymbol{p}_m} &\quad V \times \mathrm{A}_{m} - \sum_{n=1}^N Q_{n,m}E_{n,m}  \nonumber\\
    \textnormal { s.t. } \,\,\,\,&\quad\sum_{n=1}^{N}\omega_{n,m} \leq \omega , \forall n \in \mathcal{N}, \nonumber\\
    &\quad~s_{n,m} \in\{0,1,2,\cdots,k_{m}\}, \nonumber\\
    &\quad~{p}_{n,m} \in (0,p_{\max}].
\end{align}
\end{lemma}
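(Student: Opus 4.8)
The plan is to apply the standard Lyapunov drift-plus-penalty methodology so that the long-term energy constraint (\ref{energyc}) is absorbed into a per-frame penalty weighted by the virtual queues $\boldsymbol{Q}_m$. I would begin by introducing the quadratic Lyapunov function $L(\boldsymbol{Q}_m) = \tfrac{1}{2}\sum_{n=1}^N Q_{n,m}^2$ and the conditional one-frame drift $\Delta(\boldsymbol{Q}_m) \triangleq \mathbb{E}\!\left[L(\boldsymbol{Q}_{m+1}) - L(\boldsymbol{Q}_m)\mid \boldsymbol{Q}_m\right]$. Squaring the queue update and using the elementary inequality $([x]^+)^2 \le x^2$, each per-device term expands as $\tfrac{1}{2}(Q_{n,m+1}^2 - Q_{n,m}^2) \le \tfrac{1}{2}(E_{n,m}-\bar{E}_n)^2 + Q_{n,m}(E_{n,m}-\bar{E}_n)$, and summing over $n \in \mathcal{N}$ bounds the instantaneous Lyapunov difference.

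The second step is to dominate the quadratic remainder by a constant. Since the transmit power is capped by $p_{\max}$, the number of transmission slots per frame is finite, and the local-computation energy $E_{n,m}^\text{local}$ ranges over a finite set of feasible partition points, each $E_{n,m}$ is uniformly bounded; together with the fixed budgets $\bar{E}_n$ this yields a finite constant $B$ with $\sum_{n=1}^N \tfrac{1}{2}(E_{n,m}-\bar{E}_n)^2 \le B$ for every $m$. Hence $\Delta(\boldsymbol{Q}_m) \le B + \sum_{n=1}^N Q_{n,m}\,\mathbb{E}\!\left[E_{n,m}-\bar{E}_n \mid \boldsymbol{Q}_m\right]$.

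Because the goal (\ref{p1a}) is to maximize a reward rather than minimize a cost, I would append the negative penalty $-V\,\mathrm{A}_m$ and study the drift-minus-reward quantity, upper bounded by $B + \sum_{n=1}^N Q_{n,m}(E_{n,m}-\bar{E}_n) - V\,\mathrm{A}_m$. The drift-plus-penalty principle then prescribes choosing the decision variables in each frame to greedily minimize the realized value of this bound, the channel state being observed at the frame start. Observing that $B$ is fixed and that $\sum_{n=1}^N Q_{n,m}\bar{E}_n$ is constant once $\boldsymbol{Q}_m$ is revealed, both can be discarded without altering the minimizer; minimizing the remaining $\sum_{n=1}^N Q_{n,m}E_{n,m} - V\,\mathrm{A}_m$ is equivalent to maximizing $V\,\mathrm{A}_m - \sum_{n=1}^N Q_{n,m}E_{n,m}$, and the instantaneous constraints (\ref{cc})--(\ref{ee}) carry over verbatim, recovering exactly $\mathcal{P}1.1$.

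The step I expect to be the main obstacle is conceptual rather than algebraic: the word ``transformed'' must be read in the Lyapunov sense, since $\mathcal{P}1.1$ is obtained by minimizing an \emph{upper bound} on the drift-minus-reward, not by an exact reformulation of $\mathcal{P}1$. Making this rigorous—verifying that the per-frame greedy rule keeps the queues mean-rate stable so that (\ref{energyc}) holds asymptotically, while the time-average accuracy stays within an $O(1/V)$ gap of the offline optimum—requires the usual comparison against a channel-state-only stationary randomized policy. This performance-guarantee argument, rather than the boundedness constant $B$, is where the real work lies, though it is naturally deferred to a separate theorem.
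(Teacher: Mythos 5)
Your proposal is correct and follows essentially the same route as the paper's Appendix~A: the same quadratic Lyapunov function, the same drift bound (your $([x]^+)^2 \le x^2$ inequality is exactly the paper's two-case analysis of the queue update), the same constant bounding of the quadratic remainder (the paper's $\theta_0$ plays the role of your $B$), and the same drift-minus-reward greedy per-frame minimization with constant terms dropped. Your closing remark that the ``transformation'' is an upper-bound minimization whose optimality/stability guarantee must be established separately is also consistent with the paper, which defers that analysis to Theorem~1.
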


\begin{proof}
    See Appendix~A.
\end{proof}



The objective of problem $\mathcal{P}1.1$ is twofold: the first term represents the optimization goal (accuracy), while the second term, the drift, serves to reduce queue backlog and maintain energy stability. The Lyapunov control parameter $V$ is a weighting coefficient that balances this trade-off. A larger $V$ prioritizes inference accuracy, potentially at the cost of higher transient energy consumption, whereas a smaller $V$ enforces stricter energy savings, possibly at the expense of accuracy.



However, problem $\mathcal{P}1.1$ still remains intractable. This stems from its mixed-integer non-linear programming (MINLP) nature, the non-analytical accuracy objective, and the inherent scheduling granularity mismatch between the task-level problem and packet-level variables. To efficiently solve $\mathcal{P}1.1$, we employ the hierarchical approach previously introduced, which is decomposed into two stages as follows:

\subsubsection{Stage I — Task-Level Resource Scheduling via Surrogate Model}
First, to address the non-analytical accuracy objective, we fit a tractable surrogate model $\widehat{\mathrm{A}}_{n,m}$ to empirical data obtained from a ResNet-50 model\cite{resnet} trained on ImageNet dataset. Substituting $\widehat{\mathrm{A}}_{n,m}$ into the task-level drift-plus-penalty problem $\mathcal{P}1.1$ yields a tractable per-frame optimization. Solving this determines the task-level decisions: the DNN partition point $\boldsymbol{s}_m$, bandwidth allocation $\boldsymbol{\omega}_m$, and estimated reference transmit power $\tilde{\boldsymbol{p}}_m$. This process also implicitly governs the batching deadline $t^\text{batch}_m$ as defined in Section \ref{batch}, enabling integrated control of computation and transmission.

\subsubsection{Stage II — Packet-Level Reference-Tracking and Adaptation}  
Then, the packet-level inner control loop addresses the granularity mismatch. It implements a reference-tracking policy that dynamically adjusts the per-slot transmit power $\boldsymbol{p}_m$ to follow the reference $\tilde{\boldsymbol{p}}_m$ from Stage I, adapting to fine-grained channel dynamics. This is integrated with an uncertainty-aware progressive feature packet transmission mechanism. The system computes predictive entropy to determine a stopping criterion, ensuring per-sample accuracy and mitigating the statistical gap of the surrogate model.

\subsection{Task-Level Resource Scheduling via Surrogate Model}

\begin{figure}[!t]
    \centering
    \includegraphics[width=0.96\columnwidth]{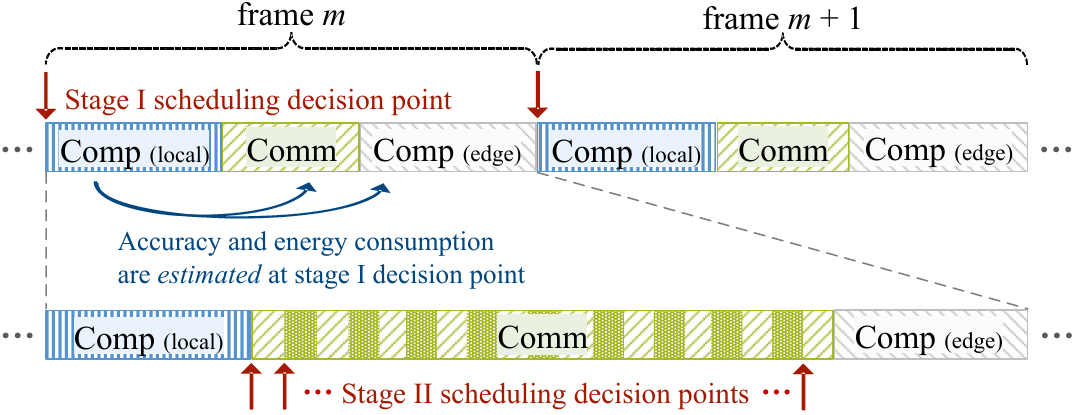} 
    \caption{Illustration of the two-stage scheduling workflow.}
    \vspace{-0.4cm}
    \label{fig:tt}
\end{figure}

We note that task-level optimization operates on a per-frame basis, which cannot precisely control per-slot power dynamics. Therefore, we first solve for split point $\boldsymbol{s}_{m}$, bandwidth $\boldsymbol{\omega}_{m}$ and a \textit{task-level reference transmit power}, denoted by $\tilde{\boldsymbol{p}}_m$, which serves as an estimated power budget for the entire frame, providing a baseline that guides the fine-grained, slot-wise adjustments in Stage II.

Using empirical accuracy curves obtained from ImageNet, we fit a tractable surrogate $\widehat{\mathrm{A}}_{n,m}(s_{n,m},\beta_{n,m})$ to represent the original accuracy function $\mathrm{A}_{n,m}(s_{n,m},\mathcal{X}_{n,m,k})$, where $\beta_{n,m} \in [0,1]$ denotes the proportion of input feature maps relative to the total feature maps at the corresponding split layer. Considering $\widehat{\mathrm{A}}_{n,m}(s_{n,m}, \beta_{n,m})$ is monotonically non-decreasing in $\beta_{n,m}$ and shows diminishing returns as $\beta_{n,m}$ increases, this surrogate allows the originally intractable $\mathcal{P}1.1$ to be reformulated into a per-task drift-plus-penalty problem.

\textbf{Surrogate Accuracy Model}: Based on our experimental results in Fig.~\ref{t1}, we model $\widehat{\mathrm{A}}_{n,m}$ as:
\begin{equation}
\setlength\abovedisplayskip{4pt}
\setlength\belowdisplayskip{4pt}
    \widehat{\mathrm{A}}_{n,m}(s_{n,m}, \beta_{n,m}) = -\frac{1}{a_{0}\beta_{n,m}-a_{1}}+a_{2}, \label{a}
\end{equation}
where $a_{0}\geq0$, $a_{1}\geq0$ and $a_{2}\geq0$ are constants depending on the DNN model architecture and partition layer position. 
Then, according to (\ref{rate}), the transmission proportion $\beta_{n,m}$ can be expressed as:
\begin{equation}
    \setlength\abovedisplayskip{5pt}
    \setlength\belowdisplayskip{5pt}
    \beta_{n,m} \!=\! \frac{b_{n,m}(s_{n,m})}{b_\text{total}(s_{n,m})} \!= \! \frac{\omega_{n,m} T_{n,m}^{\text{tr}} \log _2\!\left(1+\frac{h_{n,m} \tilde{p}_{n,m}}{\sigma^2}\right)}{ b_\text{total}(s_{n,m}) D L_s^\text{h} L_s^\text{w}}, \label{beta}
\end{equation}
where $h_{n,m}$ is the estimated average frame channel gain constant and $\sigma^2$ is the noisepower spectrum density, which in practical system implementations is commonly used to represent the overall noise power $N_{0}\omega_{n,m}$, and here we adopt the equivalent representation throughout the subsequent analysis and experiments for clarity and consistency.
$T_{n,m}^{\text{tr}}$ indicates the maximum allowable communication delay. As discussed in Section \ref{ppt}, the total latency budget is typically insufficient to allow transmission of all intermediate features.
Thus, the communication delay in each frame is:
\begin{equation}
    \setlength\abovedisplayskip{5pt}
    \setlength\belowdisplayskip{5pt}
    T_{n,m}^{\text{tr}}=T-(t_{n,m}^\text{local}+t_{n,m}^\text{edge}).
\end{equation}

Once the mathematical form of the accuracy function has been obtained, we can further reformulate problem $\mathcal{P}1.1$ as:
\begin{align}
    \setlength\abovedisplayskip{1.5ex}
    \setlength\belowdisplayskip{1.5ex}
    \mathcal{P}1.2:\!\!\!\max _{\boldsymbol{s}_m,\boldsymbol{\omega}_m,\boldsymbol{\tilde{p}}_m} &\,\, V \sum_{n=1}^N\widehat{\mathrm{A}}_{n,m}(s_{n,m},\beta_{n,m}) - \sum_{n=1}^N Q_{n,m}\tilde{E}_{n,m}  \nonumber\\
    \text { s.t. } \,\,\,\,&\,\, \text{(\ref{cc}), (\ref{dd}) and (\ref{ee}}),
\end{align}
where $\tilde{E}_{n,m}=E_{n,m}^\text{local}+\tilde{p}_{n,m}T_{n,m}^\text{tr}$ denotes the estimated energy consumption of (\ref{66}) associated with the estimated power $\tilde{p}_{n,m}$.

For notational simplicity, starting from $\mathcal{P}1.2$, we omit the frame index $m$ in this subsection since the scheduling and optimization are performed within the current frame.

To efficiently solve this MINLP problem, we seek to decouple the optimization of the resource variables $(\boldsymbol{\omega,\tilde{p}})$ and the DNN partition point $\boldsymbol{s}$. We then have the following lemmas for problem $\mathcal{P}1.2$.

\begin{lemma} \label{l2}
For a fixed DNN partition point $\boldsymbol{s}$, problem $\mathcal{P}1.2$ is concave with respect to $\tilde{\boldsymbol{p}}$ when $\boldsymbol{\omega}$ is fixed. The Karush–Kuhn–Tucker (KKT) conditions then yield the conditional optimal transmit power:
\begin{equation}
    \tilde{p}_{n}^*=
    \frac{\sigma^{2}}{h_{n}}
    \left(2^{\gamma}\exp\!\left( 2 W\!\left[\frac{1}{2\sqrt{2}^{\,\gamma}}
    \sqrt{
        \frac{
            \ln 2 \gamma h_{n} V  
        }{
            a_{1} \sigma^{2}  T_{n}^{\mathrm{tr}} Q_{n}
        }
    }\right] \right) -1 \right),
\end{equation}
where $W(\cdot)$ denotes the Lambert $W$ function, and $\gamma\!=\!\frac{a_{1}b_\text{total}(s_{n})DL_{s}^\text{L}L_{s}^\text{w}}{a_{0}\omega_{n}T_{n}^\text{tr}}$ for notational convenience.
\end{lemma}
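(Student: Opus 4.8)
The plan is to exploit the separable structure of $\mathcal{P}1.2$, establish concavity by a direct second-derivative computation, and then reduce the first-order optimality condition to a Lambert-$W$ form. First I would observe that once $\boldsymbol{s}$ and $\boldsymbol{\omega}$ are fixed, the objective decouples completely across users: each $\tilde{p}_n$ enters only through the single summand $V\widehat{\mathrm{A}}_{n}(s_n,\beta_n)-Q_n\tilde{E}_n$, and the box constraint (\ref{ee}) is per-user. Hence it suffices to prove concavity and compute the optimizer for one user at a time, treating the remaining users' variables as constants.

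For the concavity step, the penalty term $-Q_n\tilde{E}_n=-Q_n(E_n^\text{local}+\tilde{p}_nT_n^{\mathrm{tr}})$ is affine in $\tilde{p}_n$, hence concave, so the work is confined to the accuracy term. Writing $u(\tilde{p}_n)=\log_2(1+h_n\tilde{p}_n/\sigma^2)$ so that $\beta_n=c_nu$ with $c_n=\omega_nT_n^{\mathrm{tr}}/(b_\text{total}(s_n)DL_s^\text{h}L_s^\text{w})$, I would compute $\mathrm{d}^2\widehat{\mathrm{A}}_n/\mathrm{d}\tilde{p}_n^2$ by the chain rule. The decisive algebraic fact is that the Shannon term obeys $u''=-\ln 2\,(u')^2$; substituting this together with $w:=a_0\beta_n-a_1$ and $w'=a_0c_nu'$ collapses the second derivative to $\mathrm{d}^2\widehat{\mathrm{A}}_n/\mathrm{d}\tilde{p}_n^2=-a_0c_n(u')^2(\ln 2\,w+2a_0c_n)/w^3$, which is nonpositive exactly in the operational regime $w=a_0\beta_n-a_1>0$, i.e. where the surrogate sits below its ceiling $a_2$. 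Concavity of the per-user objective over the convex interval $(0,p_{\max}]$ follows, so the KKT conditions are both necessary and sufficient for the global maximizer.

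For the optimal power, I would write the stationarity condition $\partial_{\tilde{p}_n}[V\widehat{\mathrm{A}}_n-Q_n\tilde{E}_n]=0$, namely $Va_0c_nu'/w^2=Q_nT_n^{\mathrm{tr}}$, and handle the box constraint by clipping the interior solution onto $(0,p_{\max}]$ afterwards. The obstruction is that this equation is transcendental, coupling $\tilde{p}_n$ both linearly through $v:=1+h_n\tilde{p}_n/\sigma^2$ and logarithmically through $w$ (since $w=a_0c_n\log_2 v-a_1$). The key move is to change the unknown to $w$: inverting gives $v=2^{\gamma}\,2^{w/(a_0c_n)}$ with $\gamma=a_1/(a_0c_n)=a_1b_\text{total}(s_n)DL_s^\text{h}L_s^\text{w}/(a_0\omega_nT_n^{\mathrm{tr}})$, so the stationarity equation becomes $2^{w/(a_0c_n)}w^2=\text{const}$. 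Setting $W_0=(\ln 2/(2a_0c_n))\,w$ turns the left side into $(4a_0^2c_n^2/(\ln 2)^2)(W_0e^{W_0})^2$, yielding an equation $W_0e^{W_0}=z$ whose positive solution is $W_0=W(z)$ by definition of the Lambert-$W$ function. Back-substituting $w=(2a_0c_n/\ln 2)W(z)$, then $v=2^\gamma e^{2W(z)}$, and finally $\tilde{p}_n=\sigma^2(v-1)/h_n$ recovers the stated closed form, once the constant $z$ is simplified to the displayed argument $\tfrac{1}{2\sqrt{2}^{\,\gamma}}\sqrt{\ln 2\,\gamma h_nV/(a_1\sigma^2T_n^{\mathrm{tr}}Q_n)}$.

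I expect the hardest part to be the change of variables that exposes the $W_0e^{W_0}$ structure, since the natural variable $\tilde{p}_n$ does not separate the polynomial and exponential dependencies, whereas passing to $w$ (equivalently to the scaled quantity $W_0$) does. A secondary subtlety is branch and sign bookkeeping: the positive square root and the principal branch of $W$ must be chosen consistently with $w>0$, which is precisely the concavity regime identified in the second step, and one should verify the resulting $\tilde{p}_n$ is positive before applying the $p_{\max}$ clip.
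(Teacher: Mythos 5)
Your proof is correct, and it is in fact more complete than the paper's own argument. On the concavity half, the paper (Appendix B) argues by composition rules: the penalty $-Q_n\tilde{E}_n$ is affine in $\tilde{p}_n$, the map $\tilde{p}_n\mapsto\beta_n=C_1\log(1+C_2\tilde{p}_n)$ is concave, and $\widehat{\mathrm{A}}_n$ is asserted to be concave and non-decreasing in $\beta_n$, so the composition is concave and the objective is a sum of concave terms. Your explicit second-derivative computation, using $u''=-\ln 2\,(u')^2$ to collapse the expression to $-a_0c_n(u')^2(\ln 2\,w+2a_0c_n)/w^3$ with $w=a_0\beta_n-a_1$, reaches the same conclusion but buys something the paper glosses over: it makes explicit that the hyperbolic surrogate is concave only in the regime $w>0$, which is the qualification the paper's composition argument silently assumes when it declares $\widehat{\mathrm{A}}_n$ concave (the surrogate is not globally concave in $\beta_n$). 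One small imprecision: your claim that the second derivative is nonpositive \emph{exactly} when $w>0$ overstates matters, since the sign pattern for $w<0$ is mixed; but only the $w>0$ regime is operationally meaningful, so nothing is lost. On the closed-form half, the paper provides no derivation at all — its appendix stops after concavity, and the Lambert-$W$ expression is simply stated in the lemma. Your reduction — stationarity $Va_0c_nu'/w^2=Q_nT_n^{\mathrm{tr}}$, inversion $v=2^{\gamma}2^{w/(a_0c_n)}$, and the rescaling $W_0=\bigl(\ln 2/(2a_0c_n)\bigr)w$ that exposes the $W_0e^{W_0}=z$ structure — is the key missing step, and back-substitution indeed reproduces the lemma's formula, including the constant $z=\tfrac{1}{2\sqrt{2}^{\,\gamma}}\sqrt{\ln 2\,\gamma h_nV/(a_1\sigma^2T_n^{\mathrm{tr}}Q_n)}$ once one uses $\gamma=a_1/(a_0c_n)$. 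So your proposal not only matches the paper's concavity argument in substance but also supplies the KKT-to-Lambert-$W$ derivation that the paper omits.
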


\begin{proof}
    See Appendix~B.
\end{proof}

\begin{lemma} \label{l3}
Problem $\mathcal{P}1.2$ is a monotonically increasing function of $\boldsymbol{\omega}$ when $\tilde{\boldsymbol{p}}$ and $\boldsymbol{s}$ are fixed.
\end{lemma}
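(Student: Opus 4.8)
The plan is to show that, with $\tilde{\boldsymbol{p}}$ and $\boldsymbol{s}$ held fixed, the objective of $\mathcal{P}1.2$ depends on $\boldsymbol{\omega}$ only through its accuracy term, and that this term is componentwise nondecreasing in each $\omega_{n,m}$. First I would separate the two parts of the objective. The penalty $\sum_{n} Q_{n,m}\tilde{E}_{n,m}$ is governed by $\tilde{E}_{n,m}=E_{n,m}^\text{local}+\tilde{p}_{n,m}T_{n,m}^{\text{tr}}$; since $E_{n,m}^\text{local}$ depends only on the partition point and local frequency, $\tilde{p}_{n,m}$ is fixed by hypothesis, and $T_{n,m}^{\text{tr}}=T-(t_{n,m}^\text{local}+t_{n,m}^\text{edge})$ involves only computation delays determined by $\boldsymbol{s}$, none of these quantities depends on $\boldsymbol{\omega}$. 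Hence the whole penalty is constant in $\boldsymbol{\omega}$, and maximizing over bandwidth reduces to maximizing $V\sum_{n}\widehat{\mathrm{A}}_{n,m}(s_{n,m},\beta_{n,m})$.

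Next I would establish monotonicity of the accuracy term by the chain rule, treating each $\omega_{n,m}$ separately because $\beta_{n,m}$ in (\ref{beta}) depends only on its own bandwidth share. With $\tilde{p}_{n,m}$ fixed, $\beta_{n,m}$ is affine in $\omega_{n,m}$ with slope $\frac{T_{n,m}^{\text{tr}}\log_2(1+h_{n,m}\tilde{p}_{n,m}/\sigma^2)}{b_\text{total}(s_{n,m})DL_s^\text{h}L_s^\text{w}}>0$, so $\partial\beta_{n,m}/\partial\omega_{n,m}>0$. Differentiating the surrogate (\ref{a}) gives $\partial\widehat{\mathrm{A}}_{n,m}/\partial\beta_{n,m}=\frac{a_0}{(a_0\beta_{n,m}-a_1)^2}\geq0$, since $a_0\geq0$. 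Composing the two yields $\partial\widehat{\mathrm{A}}_{n,m}/\partial\omega_{n,m}\geq0$, so each accuracy term is nondecreasing in its own bandwidth; because raising $\omega_{n,m}$ affects only the $n$-th summand, the $V$-weighted sum, and hence the full objective, is monotonically increasing in $\boldsymbol{\omega}$ in the componentwise sense.

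The main subtlety I expect to address is the pole of the surrogate at $\beta=a_1/a_0$, where (\ref{a}) is undefined and the derivative expression changes meaning. I would argue that the physically meaningful operating regime confines $\beta_{n,m}$ to a single side of this pole over the feasible bandwidth range (equivalently, the fitted curve is used only on the interval where it is finite), so the derivative keeps a constant nonnegative sign and no discontinuity is crossed; with this caveat the monotonicity follows cleanly. Finally I would record the intended corollary for the subsequent allocation step: since the objective strictly increases with bandwidth while the penalty is bandwidth-invariant, any optimal allocation must exhaust the total budget, i.e., constraint (\ref{cc}) is active at the optimum.
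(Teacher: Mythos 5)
Your proof is correct and follows exactly the route the paper intends: the paper dispatches this lemma in one line (``easily proved by deriving the derivative of $\mathcal{P}1.2$ with respect to each $\boldsymbol{\omega}$''), and you supply precisely that computation --- the $\boldsymbol{\omega}$-independence of the penalty term $\sum_n Q_{n,m}\tilde{E}_{n,m}$, the positive affine dependence of $\beta_{n,m}$ on its own $\omega_{n,m}$ under the paper's constant-$\sigma^2$ convention, and the nonnegative derivative $\frac{a_0}{(a_0\beta_{n,m}-a_1)^2}$ of the surrogate. Your added caveat about staying on one side of the pole at $\beta_{n,m}=a_1/a_0$ (which is also implicitly required for the concavity used in Lemma~2) and the active-budget corollary are careful refinements the paper omits, but they do not constitute a different approach.
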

\begin{proof}
    This can be easily proved by deriving the derivative of $\mathcal{P}1.2$ with each $\boldsymbol{\omega}$, and thus it is omitted here.
\end{proof}



\newcommand{\SolveKKT}{\text{SolveKKT}}

\begin{algorithm}[t] 
    \caption{Task-Level Iterative Bandwidth and Power Allocation Algorithm}
    \label{suanfa1}
    \begin{algorithmic}[1]
        \STATE \textbf{Input:} $\boldsymbol{s}_m$, $\omega$, $p_{\max}$, $\epsilon_{\text{conv}}$, $I_{\max}$, $\boldsymbol{Q}_m$;
        \STATE \textbf{Output:} Optimal $\boldsymbol{\omega}^*_m$, $\tilde{\boldsymbol{p}}^*_m$, and Utility $U_{s_{n,m}}(\boldsymbol{\omega}^*,\tilde{\boldsymbol{p}}^*)$;
        
        \STATE \textbf{Initialize:} $i=0$, $\omega_{n,m}^{(0)} = \omega / N$, $\tilde{p}_{n,m}^{(0)} = p_{\max}$ for all $n$;
        \STATE Compute initial utility $U_{s_{n,m}}^{(0)}$ and reward $\Phi_{n,m}^{(0)}$.
        
        \STATE \textbf{repeat}
            \STATE $i = i + 1$;
            \STATE Compute $\Phi_\text{total} = \sum_{k=1}^{N} \Phi_{k,m}^{(i-1)}(\tilde{p}_{k,m}^{(i-1)})$;
            \STATE Update $\omega_{n,m}^{(i)}$ based on (\ref{daikuan}) for all $n$;
            
            \STATE Update $\tilde{p}_{n,m}^{(i)}$ based on Lemma \ref{l2} for all $n$;
            
            \STATE Compute $U_{s_{n,m}}^{(i)}$ using (\ref{utility});
            \STATE Update $\Phi_{n,m}^{(i)}$ according to (\ref{reward});
        \STATE \textbf{until} $|U_{s_{n,m}}^{(i)} - U_{s_{n,m}}^{(i-1)}| < \epsilon_{\text{conv}}$ or $i \ge I_{\max}$;
        
        \STATE $U_{s_{n,m}}(\boldsymbol{\omega}^*_m,\tilde{\boldsymbol{p}}^*_m)) = \sum_{n=1}^N U_{s_{n,m}}^{(i)}$;
        \STATE \textbf{return} $\boldsymbol{\omega}^*_m = \boldsymbol{\omega}^{(i)}_m$, $\tilde{\boldsymbol{p}}^*_m = \tilde{\boldsymbol{p}}^{(i)}_m$, $U_{s_{n,m}}(\boldsymbol{\omega}^*_m,\tilde{\boldsymbol{p}}^*_m)$.
    \end{algorithmic}
\end{algorithm}

Based on Lemma \ref{l2} and \ref{l3}, we develop an iterative algorithm, detailed in Algorithm \ref{suanfa1}, to solve the problem efficiently, which jointly optimizes $\boldsymbol{\omega}$ and $\tilde{\boldsymbol{p}}$. First, we define a utility function $U_{s_n}$ based on the optimization objective:
\begin{equation}
    \setlength\abovedisplayskip{5pt}
    \setlength\belowdisplayskip{5pt}
    U_{s_n}(\omega_n,\tilde{p}_n) = V \widehat{\mathrm{A}}_{n}(s_{n},\beta_{n}) - Q_{n}\tilde{E}_{n}. \label{utility}
\end{equation}

Then, a unit-bandwidth reward function aiming to quantify the achievable benefit per unit bandwidth is defined as:
\begin{equation}
    \setlength\abovedisplayskip{5pt}
    \setlength\belowdisplayskip{5pt}
    \Phi_{n}(\tilde{p}_{n}) = U_{s_n}(\tilde{p}_{n},\omega_0), \label{reward}
\end{equation}
where $\omega_0$ represents the unit bandwidth. Using this reward, the total bandwidth $\omega$ is allocated among devices at iteration $i\geq0$ proportional to their respective contributions:
\begin{equation}
    \setlength\abovedisplayskip{5pt}
    \setlength\belowdisplayskip{5pt}
    \omega_{n}^{(i)} = \frac{\Phi_{n}^{(i)}(\tilde{p}_{n})}{\sum_{j=1}^N \Phi_{j}^{(i)}(\tilde{p}_{j})} \omega.\label{daikuan}
\end{equation}

Given this allocation, the estimated transmit power $\tilde{p}$ can be updated by Lemma \ref{l2}, ensuring that the utility is maximized under the given bandwidth constraints. 

The algorithm alternates between updating the bandwidth allocation based on (\ref{daikuan}) and transmit power via Lemma \ref{l2}, iterating until the utility function converges within a threshold $\epsilon_{\text{conv}}$ or total number of iterations exceeds $I_{\max}$.

For initialization, the total bandwidth $\omega$ is uniformly allocated across all devices, which is expressed as
$\omega_{n}^{(0)} = \frac{\omega}{N}$,
and the initial transmit power of each device $\tilde{p}_{n}^{(0)} $ is set to its maximum admissible value $p_{\max}$. Based on these initial allocations, the utility function $U_{s_n}^{(0)}$ and the reward function $\Phi_{n}^{(0)}$ are then computed to serve as the starting point for the subsequent iterative procedure.

After obtaining the optimal bandwidth $\boldsymbol{\omega}^*$ and transmit power $\boldsymbol{\tilde{p}}^*$ for fixed partition points, we next determine the optimal partition points. We adopt a greedy approach by evaluating the utility for each candidate in the finite partition set $\mathcal{S}$ and select the $\boldsymbol{s}$ that yields the maximum utility. Specifically, for each device $n$, we evaluate the best utility $U_{s_n}(\omega^*_n,\tilde{p}^*_n)$ for each candidate $s_n \in \mathcal{S}$ using $\omega^*_n$ and $\tilde{p}^*_n$. The split point that maximizes this utility, $s_n^\star = \arg\max_{s \in \mathcal{S}} U_{s_n}(\omega^*_n,\tilde{p}^*_n)$, is then selected, and its associated $(\omega^\star_n, \tilde{p}^\star_n)$ are retained as the global optimal task-level resource allocation.

Furthermore, to enhance scalability and reduce computational complexity in dense-user scenarios, a regional aggregation strategy can be employed. This approach groups users with similar channel conditions in geographical locations, computing a representative resource allocation for the cluster rather than iterating for each user individually.

\begin{algorithm}[t!]
\caption{ENACHI Algorithm}
\label{alg:erachi}
\begin{algorithmic}[1]
\STATE \textbf{Input:} $\boldsymbol{Q}_m$, $V, \mathcal{S}, \omega, p_{\max}, \bar{E}_n$.
\STATE \textbf{Output:} Decisions $[\boldsymbol{\omega}^*_m, \boldsymbol{p}^*_m, \boldsymbol{s}^*_m]$

\STATE Initialize partition vector $\boldsymbol{s}^*_m = [s_0, \dots, s_0]$.
\STATE \textbf{for} each user $n = 1$ to $N$ \textbf{do}
    \STATE \quad \textbf{for} each candidate partition point $i \in \mathcal{S}$ \textbf{do}
        \STATE \quad \quad Calculate $U_i(s_{n,m}^*,\tilde{p}_{n,m}^*)$ using \textbf{Algorithm 1};
    \STATE $\boldsymbol{s}^\star_m = \arg\max_{i \in \mathcal{S}} \boldsymbol{U}_i$;

\STATE Solve $[\boldsymbol{\omega}^\star_m, \boldsymbol{\tilde{p}}^\star_m]$ using \textbf{Algorithm 1}$(\boldsymbol{s}_m^\star, \boldsymbol{Q}_m)$;

\STATE Initialize inner power queues $q_{n,m,1} = 0$ for all $n$.
\STATE Initialize received packets set $\mathcal{X}_{n,m,0} = \emptyset$ for all $n$.

\STATE \textbf{for} each time slot $k = 1, 2, \dots, K$ \textbf{do}
    \STATE \quad \textbf{for} each user $n = 1$ to $N$ \textbf{do}
        \STATE \quad \quad \textbf{if} user $n$ is not terminated \textbf{then}
            \STATE \quad \quad \quad Calculate $p_{n,m,k}^*$ using (\ref{calp});
            \STATE \quad \quad \quad \textbf{for} $\!j\!=\!1,2,\cdots\!,\min\{|\mathcal{X}_{n,m}|\!-\!|\mathcal{X}_{n,m,k}|,b_{n,m,k}\}$ \textbf{do}
            \STATE \quad \quad \quad \quad Select feature packet according to (\ref{eq:packet_selection});
            \STATE \quad \quad \quad Transmit feature packet $\Psi_{n,m,k}$ with $p_{n,m,k}^*$;
            \STATE \quad \quad \quad Updates $\mathcal{X}_{n,m,k} = \mathcal{X}_{n,m,k-1} \cup \Psi_{n,m,k}$;
            \STATE \quad \quad \quad Evaluates uncertainty $h_s(\mathcal{X}_{n,m,k}\,|\,\Lambda_s)$;
            \STATE  \quad \quad \quad \textbf{if} $h_s(\mathcal{X}_{n,m,k}|\Lambda_s)\! < \!H_\text{th}$ or deadline reached \textbf{then}
                \STATE \quad \quad \quad \quad Send TERMINATION signal to user $n$;
            \STATE \quad \quad \quad Update $q_{n,m,k}$.

\STATE Observe actual energy $E_{n,m}$ for all $n$;
\STATE Update $Q_{n,m+1} = [Q_{n,m} + E_{n,m} - \bar{E}_n]^+$.
\end{algorithmic}
\end{algorithm}

\subsection{Packet-Level Reference-Tracking and Adaptation} \label{3c}

While Stage I provides a frame-wise resource reference based on a tractable surrogate model, this solution is derived from statistical fitting, failing to capture the variability of individual inputs. 

To address the limitations of the task-level solution, we seek to enable the per-slot adaptation. Since the partition point $\boldsymbol{s}_m$ and bandwidth $\boldsymbol{\omega}_m$ have already been determined in the first stage, the online adaptation focuses on adjusting the transmit power $\boldsymbol{p}_{n,m,k}$ and regulating packet delivery via task-aware progressive transmission to balance per-slot efficiency with long-term energy sustainability.

Consequently, the task-aware packet-level inner control loop is introduced. Its objective is twofold: opportunistically maximize the per-slot transmission of feature packets based on instantaneous channel conditions, and ensure the long-term average transmit power robustly tracks the reference $\tilde{p}_{n,m}^\star$ from the task-level. This forms the per-slot optimization problem:
\begin{subequations}
    \begin{align}
        \mathcal{P}2.1:  \max _{{p_{n,m,k}}} &\quad \frac{1}{k_{n,m}}\sum_{k=1}^{k_{n,m}} b_{n,m,k} \\
        \text { s.t. } &\quad \frac{1}{k_{n,m}}\sum_{k=1}^{k_{n,m}}p_{n,m,k}\leq \tilde{p}_{n,m}^\star, \label{2longe}\\
        &\quad p_{n,m,k} \in (0,p_{\max}], \label{22c}
    \end{align}
\end{subequations}
where $b_{n,m,k}$ represents the maximal packet size defined in (\ref{b}), $p_{n,m,k}$ denotes the per-slot transmit power, and ${k_{n,m}}=\frac{T_{n,m}^\text{tr}}{t_\text{slot}}$ is the actual transmit slot number.

To enforce the long-term power constraint (\ref{2longe}) in the per-slot domain, we adopt a finite-horizon Lyapunov-based optimization for the inner control loop, as it operates over the limited number of slots within a single frame. A virtual power queue $q_{n,m,k}\geq0$ is defined for each device $n$ within frame $m$, which tracks the deviation between the accumulated transmit power $p_{n,m,k}$ and the reference power $\tilde{p}_{n,m}$ obtained from the task-level stage. This approach decomposes the problem into tractable, per-slot subproblems, analogous to the task-level optimization. We define the virtual power queue dynamics as:
\begin{equation}
    q_{n,m,k+1} = \left[ q_{n,m,k}+p_{n,m,k}-\tilde{p}_{n,m} \right]^+,
\end{equation}
where $q_{n,m,1}=0$. 


Then, we transform problem $\mathcal{P}2.1$ into a multiple deterministic, per-slot optimization problem, which aims to opportunistically maximize the drift-plus-penalty term. This transformed drift-plus-penalty problem can be stated as:
\begin{subequations}
    \begin{align}
        \mathcal{P}2.2:  \max _{{p_{n,m,k}}} &\quad v\times b_{n,m,k} -q_{n,m,k}p_{n,m,k}\\
        \text { s.t. } &\quad \text{(\ref{2longe}), (\ref{22c}),}
    \end{align}
\end{subequations}
where $v$ is the inner-loop Lyapunov control parameter. Analogous to $V$ in the task-level, $v$ governs the trade-off in the packet-level optimization, balancing the objective of maximizing per-slot transmitted packets $b_{n,m,k}$ against the need to stabilize the power queue $q_{n,m,k}$.

Problem $\mathcal{P}2.2$ is a strictly concave optimization problem with respect to the variable $p_{n,m,k}$. As detailed in the Appendix~C, this holds because the Hessian matrix of the objective function is symmetric and positive definite. Therefore, the optimal solution for the per-slot transmit power $p_{n,m,k}^\star$ can be derived analytically using the KKT conditions:
\begin{equation}
    p_{n,m,k}^\star = \frac{v\,\omega_{n,m}}{q_{n,m,k} D L_{s}^\text{h}L_{s}^\text{w}} \ln2 - \frac{\sigma^2}{h_{n,m,k}}, \label{calp}
\end{equation}
where $h_{n,m,k}$ is the deterministic channel gain of time slot $k$.

This solution yields a dynamic per-slot power policy that adapts to instantaneous queue states and channel dynamics. This per-slot policy thus translating the static, task-level budget into a fine-grained, adaptive control action.

Once the optimal per-slot power $p_{n,m,k}^\star$ is determined, device starts the progressive transmission of feature packets, which generally consists of one or multiple feature maps. 

Instead of transmitting arbitrary feature maps, the server-side controller performs importance-aware feature selection, choosing the \textit{most informative} and \textit{yet untransmitted} features at the beginning of each slot. Following the methodology in \cite{zhyx}, the importance of the $j$th parameter from the split layer of DNN model is quantitatively estimated through a first-order Taylor expansion of the learning loss $\mathcal{L}$, which can be obtained from the last round of model training as computed using the back-propagation algorithm: $\tilde{I}(w_j) = \left(\frac{\partial\mathcal{L}}{\partial w_{j}} \cdot w_{j}\right)^{2}$. 

Let $\mathcal{X}_{n,m}$ denote the set of total feature maps generated from the split point $s_{n,m}$, then the total importance of the $i$-th feature map $\mathcal{X}_i \in \mathcal{X}_{n,m}$ is the sum of these values for all parameters $w_j$ within the corresponding filter that generates it, represented by $g_c(\mathcal{X}_i) = \sum_{w_j \in  \mathcal{X}_i} \tilde{I}(w_j), \forall \mathcal{X}_i \in \mathcal{X}_{n,m}.$

Recall that $\mathcal{X}_{n,m,k}$ is the cumulative feature maps transmitted by slot $k$, and let $\Psi_{n,m,k}$ denote the feature packet transmitted at slot $k$. During each slot, as long as the packet size satisfies $|\Psi_{n,m,k}{}|<\min \{|\mathcal{X}_{n,m}|-|\mathcal{X}_{n,m,k}|,b_{n,m,k}\}$, new feature maps are iteratively added according to their importance scores as:
\begin{equation} \label{eq:packet_selection}
\Psi_{n,m,k} \!=\! \Psi_{n,m,k} \cup \left\{\underset{\!\!\mathcal{X}_i \in \left(\mathcal{X}_{n,m}\backslash\mathcal{X}_{n,m,k-1}\backslash\Psi_{n,m,k}  \right)}{\arg\max}  g_c(\mathcal{X}_i)\! \! \right\}.
\end{equation}

Upon receiving the packets at the end of each slot, the server performs an interim inference using the cumulative set of all packets received thus far. Then, the server evaluates the inference reliability using the trained uncertainty predictor $h_s\left(\mathcal{X}_{n,m,k}\,|\,\Lambda_s \right)$ mentioned in Section II.

The transmission process for device $n$ is terminated if the uncertainty condition $H_\text{th}$ is met, $h_s(\mathcal{X}_{n,m,k}|\Lambda_s) \!\!\le \!\!H_\text{th}$, or the hard transmission deadline, $t^\text{batch}_m$, is reached. Otherwise, the progressive transmission continues to the next slot.

\subsection{Performance Analysis}

We now characterize the theoretical performance of the proposed ENACHI algorithm by establishing a comparison with an ideal offline optimum of solving problem $\mathcal{P}1$. Let $\{\boldsymbol{s}_m^*, \boldsymbol{\omega}_m^*, \boldsymbol{p}_m^*\}$ represent the optimal offline decision sequence, and let $\mathrm{A}_m^*$ denote the corresponding maximum accuracy of frame $m$.

In our setting, decisions rely on the surrogate accuracy model $\widehat{\mathrm{A}}_m$ and the estimated energy consumption $\tilde{E}_{n,m}$ derived from the task-level scheduler. Define $\sum_{m=1}^M\widehat{\mathrm{A}}_m^\ddagger$ as the cumulative average inference accuracy of the proposed algorithm, which is obtained by solving $\mathcal{P}1.2$ in each frame.
We consider the wireless channel states to be temporally independent, without imposing any specific distributional assumption. The performance bound of the proposed algorithm is presented in the following theorem.

\begin{theorem}
Relative to the offline optimal solution, the cumulative inference accuracy of ENACHI is bounded by
\begin{equation}
    \sum_{m=1}^{M}\mathrm{A}_{m}^{\ddagger} \ge \sum_{m=1}^{M} \mathrm{A}_{m}^{*}-\frac{\theta_0M^2 + M(M-1)\delta_0\sum_{n=1}^N\theta_n}{V} - 2\xi_0,
\end{equation}
and the cumulative energy consumption violation satisfies
\begin{align}
    \sum_{m=1}^{M} &E_{n,m} \le M \bar{E}_n +\nonumber\\
    &\sqrt{2\theta_0M^2 +2M(M-1)\delta_0\sum_{n=1}^N\theta_n+4\xi_0V},
\end{align}
where $\theta_0\!\triangleq \!\sum_{n=1}^N \frac{1}{2}\theta_n^2$, $\theta_n \!\triangleq\! \max_m \{|E_{n,m}\!-\!\bar{E}_n| \}$, $\delta_0\!\triangleq \!\max_{\{n,m\}}\!\!\left\{\left|\tilde{E}_{n,m}\!-\!E_{n,m} \right|\right\}$ and $\xi_0\!\triangleq \!\max_{m}\left\{\left|\widehat{\mathrm{A}}_m\!-\!\mathrm{A}_{m} \right|\right\}$.
\end{theorem}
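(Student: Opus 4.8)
The plan is to run a Lyapunov drift-plus-penalty argument on the quadratic Lyapunov function $L(\boldsymbol{Q}_m)=\tfrac12\sum_{n=1}^N Q_{n,m}^2$, but to carry the two surrogate-model errors $\delta_0$ (energy) and $\xi_0$ (accuracy) explicitly through every inequality. First I would square the queue recursion $Q_{n,m+1}=[Q_{n,m}+(E_{n,m}-\bar E_n)]^+$ and drop the projection to obtain the one-frame drift bound $\Delta_m:=L(\boldsymbol{Q}_{m+1})-L(\boldsymbol{Q}_m)\le \theta_0+\sum_{n}Q_{n,m}(E_{n,m}-\bar E_n)$, absorbing the squared-deviation noise into $\theta_0=\sum_n\tfrac12\theta_n^2$ via $\tfrac12(E_{n,m}-\bar E_n)^2\le\tfrac12\theta_n^2$.

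Next I would form the surrogate drift-plus-penalty $\Delta_m-V\widehat{\mathrm A}_m^{\ddagger}$ and exploit the fact that the task-level scheduler solves $\mathcal P1.2$ optimally within each frame, i.e. its decisions minimise $\sum_n Q_{n,m}\tilde E_{n,m}-V\widehat{\mathrm A}_m$ over the feasible set. Hence the algorithm's value is no larger than that of the offline-optimal decisions evaluated in the same surrogate objective, which lets me replace $(\tilde E^{\ddagger},\widehat{\mathrm A}^{\ddagger})$ by $(\tilde E^{*},\widehat{\mathrm A}^{*})$. I then bridge estimated and actual energy on both sides using $|\tilde E_{n,m}-E_{n,m}|\le\delta_0$: once on the algorithm's side to pass from $E^{\ddagger}$ to $\tilde E^{\ddagger}$, and once on the optimal's side to pass from $\tilde E^{*}$ back to $E^{*}$. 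This produces a residual $2\sum_n Q_{n,m}\delta_0$ and leaves the familiar comparison term $\sum_n Q_{n,m}(E_{n,m}^{*}-\bar E_n)$.

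The decisive step, and the one I expect to be the main obstacle, is that both residual terms are \emph{weighted by the virtual queue}, which is not bounded a priori, so the classical $O(1/V)$ argument with a constant drift bound does not apply. The remedy is the linear queue-growth estimate: from $Q_{n,m+1}\le Q_{n,m}+\theta_n$ with $Q_{n,1}=0$ one gets $Q_{n,m}\le (m-1)\theta_n$, and together with $|E_{n,m}^{*}-\bar E_n|\le\theta_n$ this turns $\sum_n Q_{n,m}(E_{n,m}^{*}-\bar E_n)\le 2(m-1)\theta_0$ and $2\sum_n Q_{n,m}\delta_0\le 2(m-1)\delta_0\sum_n\theta_n$ (using $\sum_n\theta_n^2=2\theta_0$). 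Summing over $m=1,\dots,M$, telescoping $\sum_m\Delta_m=L(\boldsymbol{Q}_{M+1})\ge0$, and using $\sum_{m=1}^M(m-1)=\tfrac{M(M-1)}2$ alongside $\sum_m\theta_0=M\theta_0$, the per-frame noise and the queue-weighted residuals assemble precisely into $\theta_0 M^2+M(M-1)\delta_0\sum_n\theta_n$; this is the source of the quadratic dependence on $M$. Dividing by $V$ and passing from surrogate to true accuracy at the two cumulative endpoints via $|\widehat{\mathrm A}_m-\mathrm A_m|\le\xi_0$ yields the stated accuracy bound.

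For the energy-violation bound I would reuse the summed inequality but keep $L(\boldsymbol{Q}_{M+1})$ on the left and the penalty on the right, upper-bounding the surrogate-accuracy gap between the algorithm and the optimum through the same $\xi_0$ conversions, which contributes the $2\xi_0 V$ term and gives $L(\boldsymbol{Q}_{M+1})\le \theta_0 M^2+M(M-1)\delta_0\sum_n\theta_n+2\xi_0 V$. Separately, dropping the projection in the recursion gives $Q_{n,M+1}\ge\sum_{m=1}^M(E_{n,m}-\bar E_n)$, so $\sum_m E_{n,m}\le M\bar E_n+Q_{n,M+1}\le M\bar E_n+\sqrt{2L(\boldsymbol{Q}_{M+1})}$ since $Q_{n,M+1}^2\le 2L(\boldsymbol{Q}_{M+1})$. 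Substituting the bound on $L(\boldsymbol{Q}_{M+1})$ under the square root produces exactly $\sqrt{2\theta_0 M^2+2M(M-1)\delta_0\sum_n\theta_n+4\xi_0 V}$, completing the argument.
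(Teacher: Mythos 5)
Your proposal reproduces the paper's argument essentially step for step: the quadratic Lyapunov function $L(\boldsymbol{Q}_m)=\tfrac12\sum_n Q_{n,m}^2$, the one-frame drift bound with constant $\theta_0$, a per-frame optimality comparison carrying $\delta_0$ and $\xi_0$ through as queue-weighted and $V$-weighted residuals, the linear queue-growth estimates $Q_{n,m}\le(m-1)\theta_n$ and $Q_{n,m}(E^*_{n,m}-\bar E_n)\le(m-1)\theta_n^2$ (the paper's key device for taming the unbounded queue weights, which you correctly identified as the crux), telescoping with $L(\boldsymbol{Q}_{M+1})\ge 0$ for the accuracy bound, and $Q_{n,M+1}^2\le 2L(\boldsymbol{Q}_{M+1})$ together with the projection-free recursion for the energy bound. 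The only structural deviation is that you compare the algorithm's decisions directly to the offline optimum inside the surrogate objective (one comparison plus two $\delta_0/\xi_0$ conversions), whereas the paper inserts an intermediate ``classical'' drift-plus-penalty solution, denoted $\dagger$, and chains two comparisons: surrogate-optimality of $\ddagger$ against $\dagger$, then true-objective optimality of $\dagger$ against $*$. Your one-step version is leaner, is equally valid (the offline per-frame decisions are feasible for the per-frame surrogate problem), and yields the identical per-frame residual $2\delta_0\sum_n Q_{n,m}+2\xi_0 V$, so nothing is lost by skipping the intermediary.

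One caveat, which you inherit from the paper rather than introduce: the $\xi_0$ bookkeeping. The accuracy conversion is paid once per frame, i.e., the $2\xi_0 V$ residual sits inside $\sum_{m=1}^{M}$, so careful summation gives a loss of $2M\xi_0$ in the accuracy bound, not $2\xi_0$; your ``two cumulative endpoints'' conversion costs $M\xi_0$ per endpoint as well, since $\bigl|\sum_m\widehat{\mathrm A}_m-\sum_m\mathrm A_m\bigr|\le M\xi_0$. The paper's own derivation makes exactly this slip (the per-frame $2\xi_0 V$ term loses its $\sum_m$ between the chained inequality and the final display), so your write-up is faithful to the paper's proof, but neither argument establishes the $M$-independent constant stated in the theorem. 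Similarly, in the energy bound, both you and the paper silently drop the term $V\sum_m\left(\mathrm A_m^{\ddagger}-\mathrm A_m^{*}\right)$ when bounding $L(\boldsymbol{Q}_{M+1})$; this requires the online algorithm's cumulative accuracy not to exceed the offline optimum's, which is plausible but unproved, because the online trajectory need not satisfy $\mathcal{P}1$'s long-term energy constraint and so is not dominated by $*$ by definition. If you want a fully rigorous version, keep that gap term and bound it by $M$ times the per-frame accuracy range, at the price of an extra $O(VM)$ term under the square root.
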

\begin{proof}
    We employ a two-step bounding technique to establish the proof; see Appendix~D for details.
\end{proof}
Theorem 1 indicates that the inference performance of the proposed algorithm can be bounded relative to its offline optimal counterpart. Meanwhile, the discrepancy between cumulative energy consumption of each device and its allocated budget is also limited. The worst-case performance can be enhanced by decreasing the upper bound of the energy usage deviation $\theta_n$, the maximum energy estimation error $\delta_0$ and the approximation error $\xi_0$. Additionally, the weight parameter $V$ allows balancing the trade-off between task inference performance and energy consumption of devices. In practical implementations, energy should be managed carefully to keep $\theta_n$, $\xi_0$ small, and $V$ should be chosen judiciously to achieve the best inference performance within the energy constraints.

\section{Experiments}\label{Section:Experiments} 
In this section, we conduct extensive simulations to evaluate the performance of our proposed ENACHI framework. We validate its effectiveness in maximizing long-term inference accuracy while ensuring device energy stability under strict deadlines. We validate our approach by comparing it against several state-of-the-art benchmark schemes.


\subsection{Simulation Setup} \label{sima}

\begin{figure}[!t]
    \centering
    \includegraphics[width=0.38\textwidth]{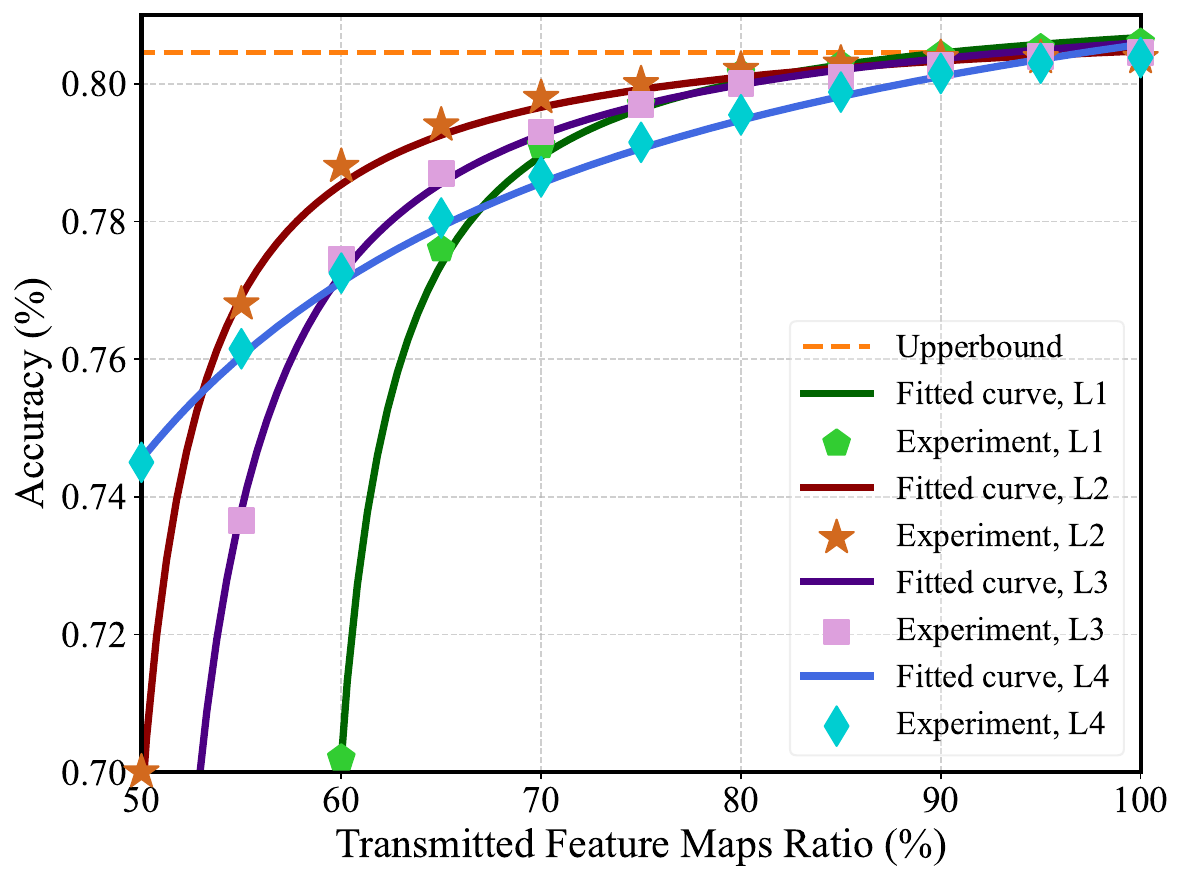}
    \vspace{-0.1cm}
    \hspace*{0.5cm}
    \caption{The experimental and fitted curves of image classification task. L1 to L4 are the selected representative partition layers from shallow to deep.}
    \label{t1}
\end{figure}

We simulate a split inference system with $N$ users and one edge server over a channel modeling both large-scale path loss and small-scale Rayleigh fading. The inference task employs a trained ResNet-50 model with upperbound accuracy of 80.38\%, and we use the ImageNet dataset. A set of feasible partition points $\mathcal{S}$ is predetermined. The accuracy–transmission relationship in (\ref{a}) is modeled by fitting to empirical data from the validation set of the dataset. All reported results are averaged over 1000 simulation rounds.

\begin{figure}[!t]
    \centering
    \includegraphics[width=0.8\columnwidth]{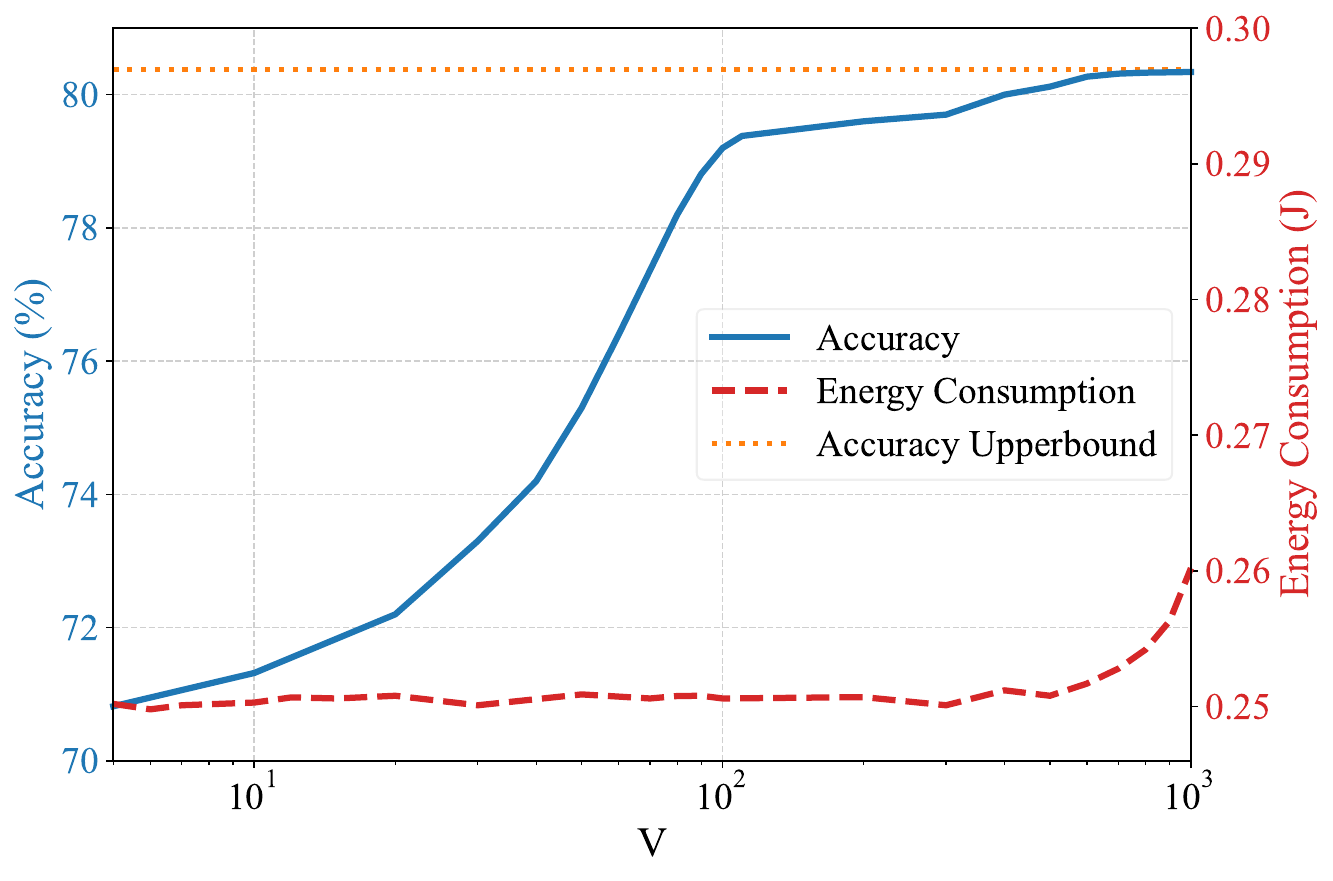} 
    \vspace{-0.3cm}
    \caption{Inference accuracy and energy consumption under different $V$.}
    \label{fig:v_tradeoff}
    \vspace{-0.5cm}
\end{figure}

\begin{figure*}[tbp!]
    \centering
    
    \subfloat[Accuracy under different deadlines.]{
        \includegraphics[width=0.305\textwidth]{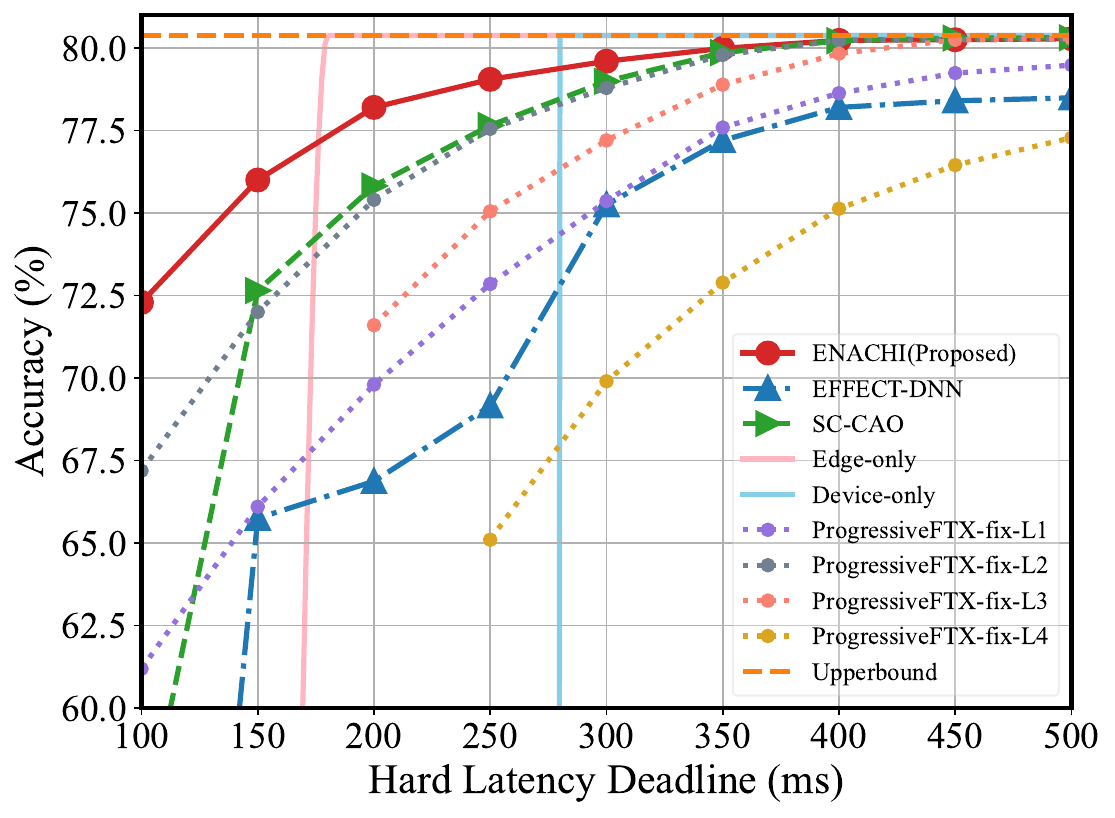}
        \label{fig:acc_ddl}
    }\hfill
    \setcounter{subfigure}{2}
    \subfloat[Accuracy under different bandwidth.]{
        \includegraphics[width=0.292\textwidth]{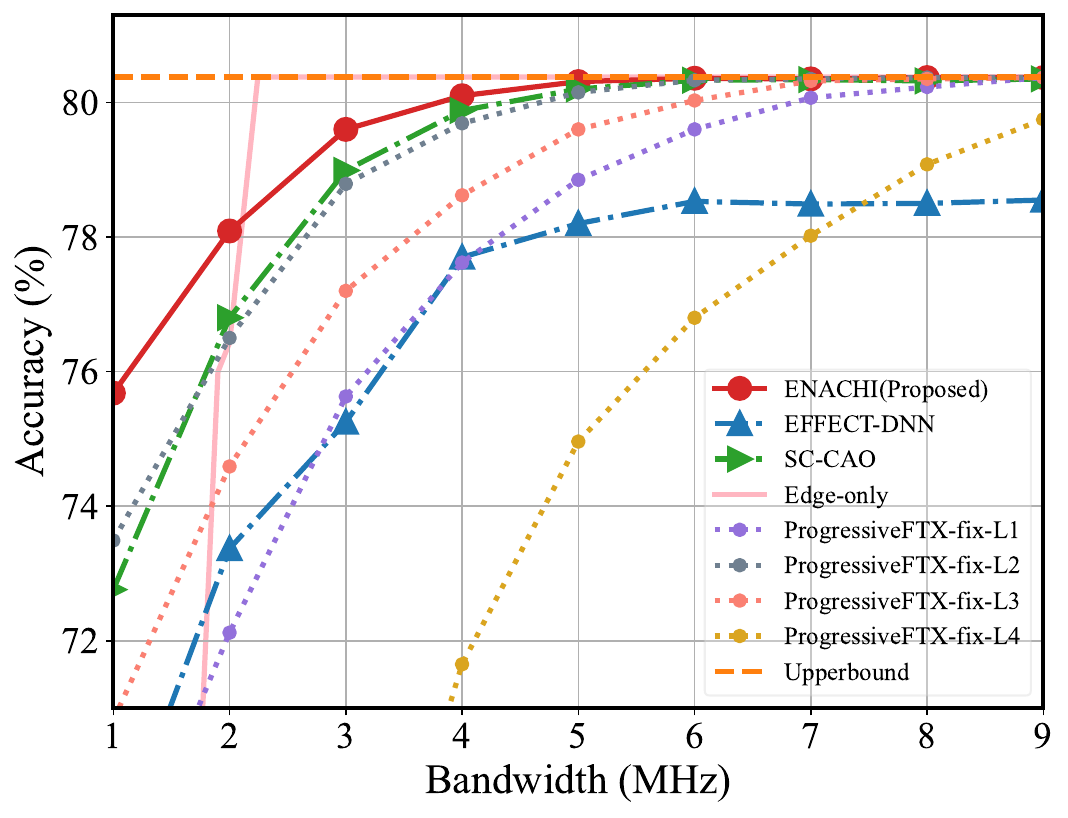}
        \label{fig:acc_bw}
    }\hfill
    \setcounter{subfigure}{4}
    \subfloat[Accuracy under different user numbers.]{
        \includegraphics[width=0.295\textwidth]{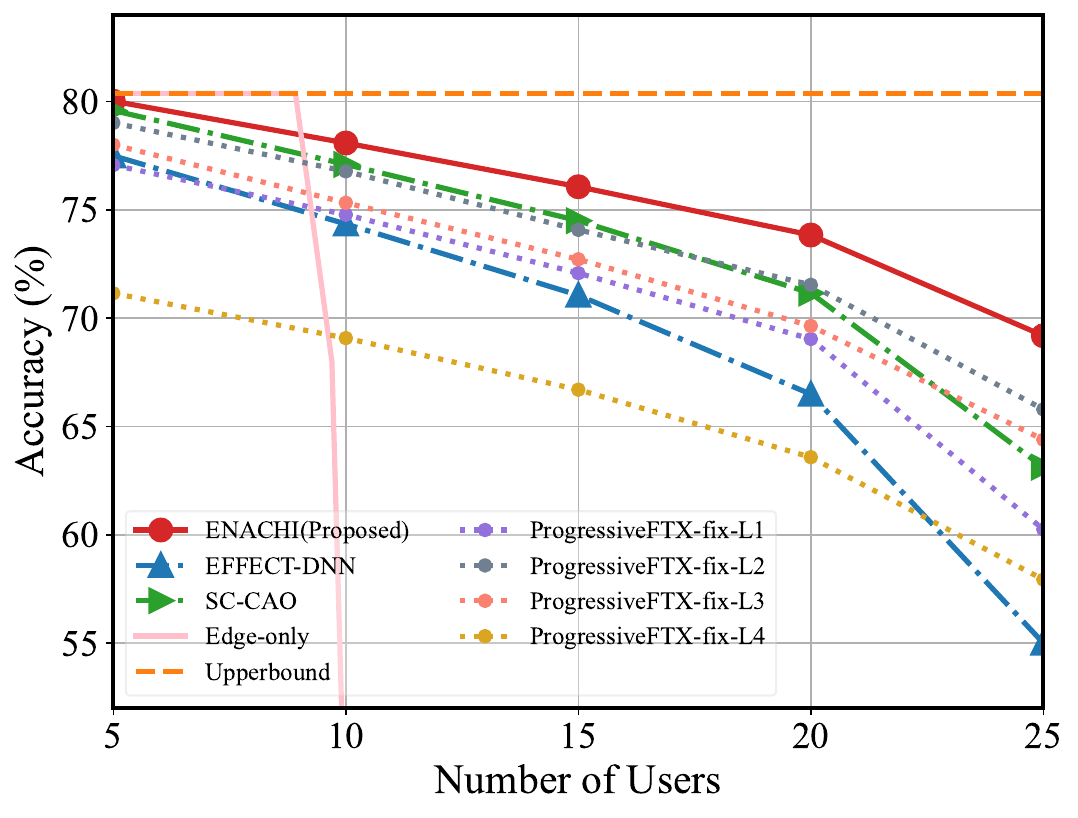}
        \label{fig:acc_users}
    }

    \vspace{-0.2mm}

    \setcounter{subfigure}{1}
    \subfloat[Energy consumption under different deadlines.]{
        \includegraphics[width=0.30\textwidth]{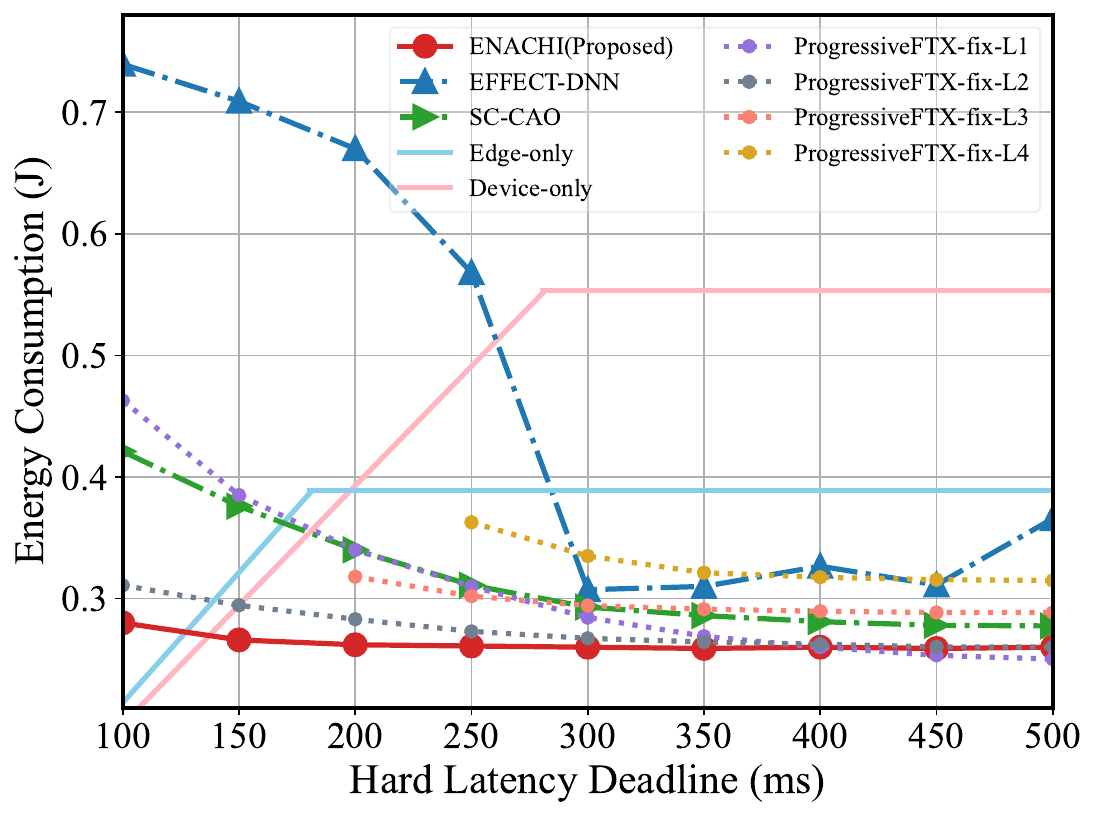}
        \label{fig:eng_ddl}
    }\hfill
    \setcounter{subfigure}{3}
    \subfloat[Energy consumption under different bandwidth.]{
        \includegraphics[width=0.30\textwidth]{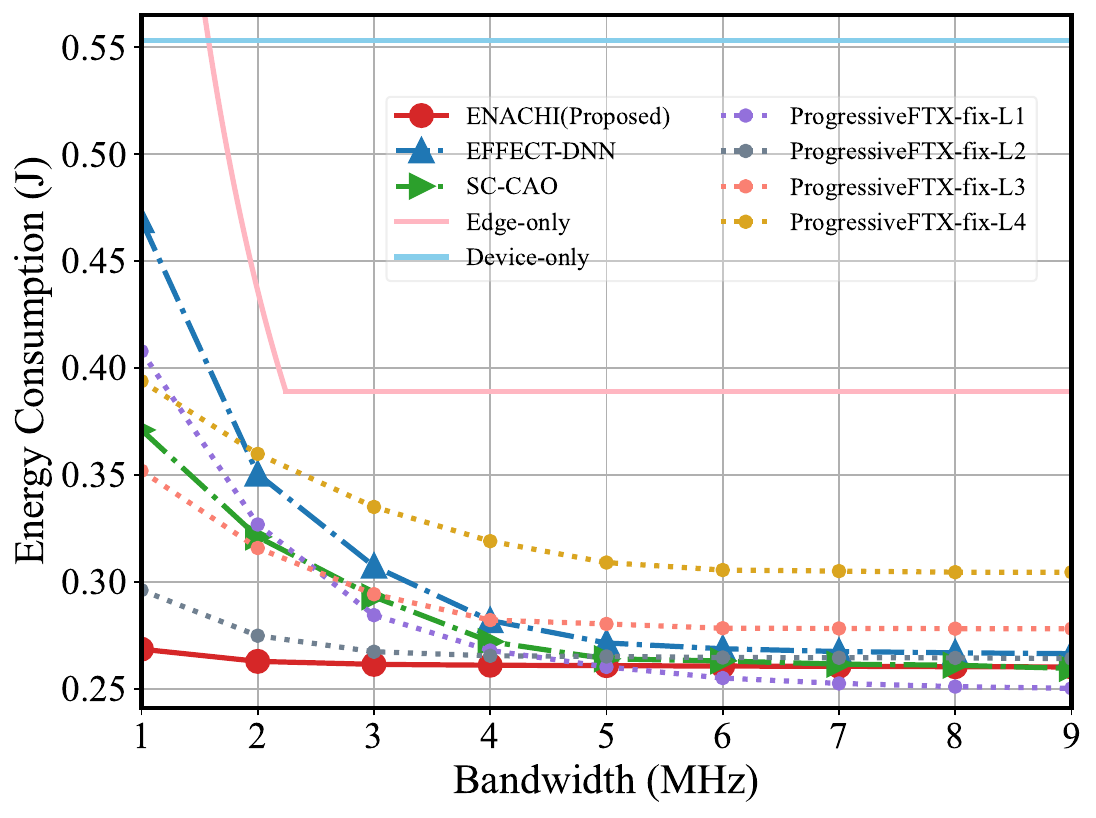}
        \label{fig:eng_bw}
    }\hfill
    \setcounter{subfigure}{5}
    \subfloat[Average energy consumption under different user numbers.]{
        \includegraphics[width=0.30\textwidth]{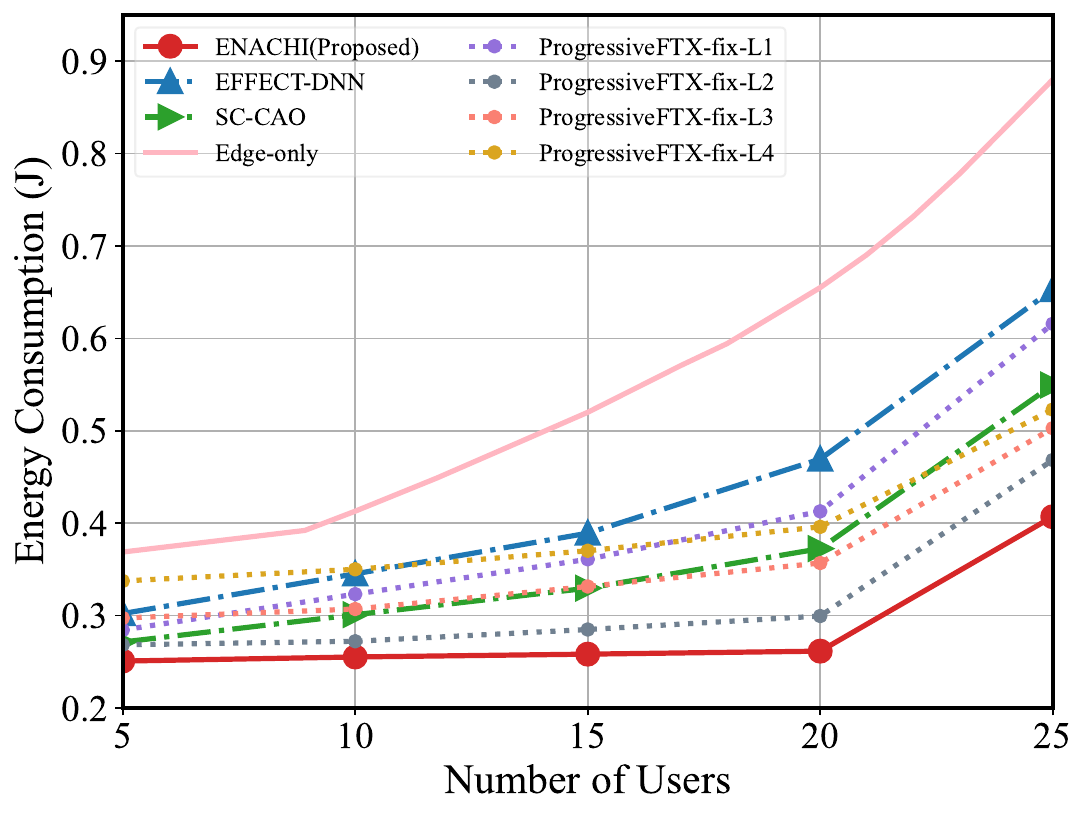}
        \label{fig:eng_users}
    }

    \caption{Performance of the proposed algorithm and benchmarks on ImageNet.}
    \label{fig:overall_results}
    \vspace{-0.4cm}
\end{figure*}

Unless otherwise specified, the main simulation parameters are set according to the configurations in \cite{exp1,c1,p3} and summarized in Table~\ref{tab:sim_params}.

\begin{table}[htbp]
    \centering
    \caption{Simulation Parameters}
    \label{tab:sim_params}
    \begin{tabular}{l|l}
        \hline
        \textbf{Parameter} & \textbf{Value} \\
        \hline
        Noise Power Spectral Density ($\sigma^2$) & 1 $\times 10^{-13}$ W \\
        Device CPU Clock Speed ($f_{n,m}$) & 2.0 GHz \\
        Edge Server GPU frequency ($f_m^\text{edge}$) & 20 GHz \\
        Device Chip Power Constant ($\alpha_n$) & $10^{-28}$ \\
        Maximum Transmit Power ($p_{\max}$) & 2 W \\
        Long-term Energy Budget ($\bar{E}_n$) & 0.25 J \\
        Outer Lyapunov Control Parameter ($V$) & 50 \\
        Inner Lyapunov Control Parameter ($v$) & 5 \\
        Frame length & 300ms \\
        Slot length  & 1ms \\
        \hline
    \end{tabular}
    \vspace{-0.2cm}
\end{table}

For comparison, the following benchmarks are considered:

(1) {EFFECT-DNN\textnormal{\cite{c2}}}: A Lyapunov-based framework that minimizes long-term device energy consumption. It targets \textit{average} latency bound rather than hard deadlines and lacks hierarchical task-aware scheduling. 

(2) {Semantic Communication based Computation-Aware Offloading (SC-CAO)\textnormal{\cite{c3}}}: This framework uses semantic communication to jointly optimize the compression ratio, computation, and transmission resources. It employs a myopic optimization strategy that focuses on the current frame rather than long-term system stability.

(3) {Progressive Feature Transmission (ProgressiveFTX)\textnormal{\cite{progressive}}}: This is an ablation study of our proposed algorithm where DNN split point is fixed. We test 4 representative partition points, from shallow to deep. 

(4) {Edge-Only}: Offloads the entire  task to the edge server. 

(5) {Device-Only}: Executes the entire task locally.

\subsection{Simulation Results}
\label{sec:performance_comparison}
Our experimental evaluations begin by validating the surrogate model and control parameters in single-user scenarios to establish a performance baseline and subsequently extend to multi-user scenarios to evaluate system scalability.

\subsubsection{Surrogate Model Validation}
We validated our hyperbolic surrogate model by selecting representative partition points L1 to L4 from the shallow to deep stages of ResNet-50. These points correspond to the 1st, 4th, 8th, and 14th convolutional layers respectively. As illustrated in Fig.~\ref{t1}, our fitted model exhibits a strong agreement with the empirical accuracy curves. This close match confirms that the surrogate function effectively captures the diminishing returns of accuracy and transmitted feature ratio, justifying its use in our framework. The experiments are conducted using the ImageNet dataset. 

\subsubsection{Impact of V} 
\label{sec:impact_of_v}
The parameter $V$ in the drift-plus-penalty formulation (17) is a crucial control knob that governs the trade-off between maximizing inference accuracy and maintaining energy queue stability. A higher $V$ prioritizes the accuracy objective, while a lower $V$ emphasizes strict adherence to the long-term energy budget. Fig.~\ref{fig:v_tradeoff} illustrates the impact of varying $V$ over several orders of magnitude on the long-term average accuracy and energy consumption of the system in a single-user scenario.
As depicted in Fig.~\ref{fig:v_tradeoff}, the behavior of the system changes significantly with the magnitude of $V$. For small values ($V \in [10^0, 10^1]$), the framework operates in an energy-conservative mode, maintaining low energy consumption at the cost of reduced accuracy. As $V$ increases into the range of $(10^1, 10^2]$, the system enters a balanced trade-off regime. Here, inference accuracy rises sharply with only a marginal increase in energy cost, indicating that our framework efficiently co-optimizes resources to achieve performance gains. When $V$ becomes very large ($V > 10^2$), the accuracy curve begins to saturate as it approaches its upper bound. In this regime, pursuing marginal accuracy gains incurs a large energy cost, signifying diminishing returns.

\subsubsection{Impact of frame deadline $T$}

We first evaluate performance under varying hard latency deadlines with a fixed 3~MHz bandwidth. Inference accuracy and device energy consumption are used as the key performance metrics. As shown in \figurename~\ref{fig:overall_results}\subref{fig:acc_ddl}, our framework achieves a remarkable 43.12\% accuracy gain over benchmarks at the stringent 100~ms deadline, reaching 72.5\% accuracy and smoothly rising to 80.1\% at 300~ms. Concurrently, it reduces energy consumption by up to 62.13\% as depicted in \figurename~\ref{fig:overall_results}\subref{fig:eng_ddl}, maintaining stable usage below 0.28~J. This performance is attributed to the dynamic algorithm with energy reference, resulting in stable energy consumption while achieving robust accuracy against varying deadlines. In comparison, SC-CAO also considers hard deadline and show robustness to changes in deadline. ProgressiveFTX and Device-only schemes are inflexible, becoming entirely infeasible for deadlines below 275~ms. The EFFECT-DNN framework, which targets average latency rather than hard deadlines, incurs an accuracy gap even under relaxed deadlines, and exhibits a significant energy performance degradation in scenarios with stringent deadline constraints. 

\subsubsection{Impact of user bandwidth $\omega$}

We evaluate the performance under varying channel bandwidth, with the hard deadline fixed at 300~ms. As shown in \figurename~\ref{fig:overall_results}\subref{fig:acc_bw}, our proposed framework achieves the best accuracy-bandwidth trade-off across all conditions. The advantage is most pronounced in the communication-constrained region of 1 to 3~MHz. Specifically, at 1~MHz, our method realizes a 9.39\% accuracy gain over benchmarks to reach 76\% accuracy, eventually saturating near 6~MHz. Regarding energy stability in \figurename~\ref{fig:overall_results}\subref{fig:eng_bw}, our approach reduces energy consumption by 42.74\% at 1~MHz compared to baselines and maintains the lowest usage across all bandwidths.
This robustness under low bandwidth is attributed to the progressive transmission mechanism. In communication-constrained scenarios, this mechanism selects high-value features based on their importance, which significantly saves communication overhead.
In comparison, 
the SC-CAO framework, which also considers data compression and dynamic bandwidth allocation, shows some robustness in communication-constrained scenarios. 
The EFFECT-DNN framework suffers from significant accuracy degradation and high energy consumption at low bandwidths, mainly due to its lack of a packet-level dynamic transmission mechanism.
The Edge-Only scheme is entirely infeasible below 2.5~MHz, and ProgressiveFTX schemes perform poorly at some points due to the inflexible strategy.

\subsubsection{Scalability Analysis}

Finally, we evaluate the scalability of the system in a multi-user scenario. In this experiment, we fix the total system bandwidth to 20~MHz and increase the number of users from 5 to 25, simulating an environment with increasing resource contention. As shown in \figurename~\ref{fig:overall_results}\subref{fig:acc_users}, as the per-user available bandwidth decreases, the accuracy of all schemes inevitably degrades. While our framework drops to nearly 70\% accuracy in bandwidth-scarce scenarios, it degrades gracefully and achieves a 14.19\% accuracy gain at 25 users compared to benchmarks. Most critically, \figurename~\ref{fig:overall_results}\subref{fig:eng_users} demonstrates strong energy scalability. As the user count rises, the energy cost remains remarkably flat below 0.28~J, realizing a 37.65\% energy saving at 25 users.
This stability is attributed to our two-tier Lyapunov framework which stabilizes individual energy consumption and mitigates inter-user competition through dynamic allocation.
In comparison, the SC-CAO  framework, despite also using dynamic allocation, suffers from linearly increasing energy consumption due to its myopic optimization.
The accuracy and energy performance of the EFFECT-DNN framework degrade once the user count exceeds 10, as the bandwidth available per user drops.
The Edge-Only and static-partition schemes demonstrate a lack of adaptability in congested networks.

\section{Conclusion}\label{Section:Conclusion}
In this paper, we aim to realize energy-aware, deadline-critical DNN offloading in multi-user split inference systems. The main challenges include the scheduling granularity mismatch between task-level and packet-level operations, and the lack of task-aware adaptation to the inherent heterogeneity of inference tasks. We proposed ENACHI, a novel ENergy-ACcuracy Hierarchical optimization for split-Inference framework built on a nested drift-plus-penalty architecture. ENACHI operates at two scales: a task-level outer loop uses a surrogate model to manage the long-term energy-accuracy trade-off and set a power reference, while a packet-level inner loop implements a reference-tracking policy to dynamically adjust per-slot packet transmission. This is integrated with an importance-aware progressive transmission and an adaptive uncertainty-based stopping criterion. Comprehensive simulations on ImageNet demonstrated that ENACHI significantly outperforms state-of-the-art benchmarks, achieving a better energy-accuracy trade-off and higher scalability, particularly under stringent deadlines.

\bibliographystyle{IEEEtran}
\bibliography{IEEEabrv,reference}

\onecolumn

\setcounter{equation}{28}

\appendices
\section{Proof of Lemma 1} \label{app1}
To characterize the congestion level of the virtual queues, we define a queue backlog vector $\Theta_{m}=(Q_{1,m}, Q_{2,m}, \cdots, Q_{n,m})$. For $\Theta_{m}$, we introduce the quadratic Lyapunov function, which is formally defined as:
\begin{equation}
	L(\Theta_{m}) = \frac{1}{2} \sum_{n=1}^{N} (Q_{n,m})^2,
\end{equation}
the size of which can provide a direct measure of the relative severity of queues accumulation. Clearly, the Lyapunov function is non-negative and we define $L(\Theta_{0})=0$.

Then, we define the Lyapunov drift function as
\begin{equation}
	\Delta(\Theta_{m}) = \mathbb{E}\big[ L(\Theta_{m+1}) - L(\Theta_{m}) \,\big|\, \Theta_{m} \big],
\end{equation}
which characterizes the expected change of the Lyapunov function between consecutive frames, i.e., $m$ and $m+1$, and it is crucial to the system stability. 

According to (12):
\begin{enumerate}
	\item if $Q_{n,m}\!+\!E_{n,m}\! -\! \bar{E}_n\! \geq \!0$, then $Q_{n,m+1}=Q_{n,m}+E_{n,m} - \bar{E}_n$, so $(Q_{n,m+1})^2=(Q_{n,m}+E_{n,m} - \bar{E}_n)^2$, $\forall n \in \mathcal{N}$;
	\item if $Q_{n,m}\!+\!E_{n,m} \!-\! \bar{E}_n \!\leq \!0$, then $Q_{n,m+1}\!=\!0\!>\!Q_{n,m}\!+\!E_{n,m} \!-\! \bar{E}_n$, so $(Q_{n,m+1})^2\!=\!0\!<\!(Q_{n,m}\!+\!E_{n,m} \!-\! \bar{E}_n)^2$, $\forall n \!\in \!\mathcal{N}$.
\end{enumerate}

In conclusion, we have 
\begin{equation}
	(Q_{n,m+1})^2 \leq (Q_{n,m}+E_{n,m} - \bar{E}_n)^2.
\end{equation}

By expanding the above equation, we can obtain that
\begin{align}
	\frac{1}{2} \sum_{n=1}^N (Q_{n,m+1})^2 &\leq \frac{1}{2}\sum_{n=1}^N(Q_{n,m})^2 +\frac{1}{2}\sum_{n=1}^N(E_{n,m} - \bar{E}_n)^2  +\sum_{n=1}^NQ_{n,m}\times(E_{n,m} - \bar{E}_n). \nonumber
\end{align}

Rearranging the terms in the above inequality and taking the expectation on both sides, we obtain
\begin{align}
	\Delta(\Theta_m) &= \frac{1}{2} \sum_{n=1}^N\mathbb{E}[(Q_{n,m+1})^2 \,\big|\, \Theta_{m}] - \frac{1}{2} \sum_{n=1}^N\mathbb{E}[(Q_{n,m})^2 \, \big|\, \Theta_{m} ] \nonumber\\ 
	&\leq \frac{1}{2} \sum_{n=1}^N\mathbb{E}[(E_{n,m} - \bar{E}_n)^2\,\big|\,\Theta_{m}] + \sum_{n=1}^N\mathbb{E}[Q_{n,m}(E_{n,m}-\bar{E}_n)\,\big|\,\Theta_{m}] \nonumber \\
	&\leq \theta_0 + \sum_{n=1}^N\mathbb{E}[Q_{n,m}(E_{n,m}-\bar{E}_n)\,\big|\,\Theta_{m}]
\end{align}
where $\theta_0\triangleq \sum_{n=1}^N \frac{1}{2}\theta_n^2$ and $\theta_n \triangleq \max_m \{|E_{n,m}-\bar{E}_n| \}$.  
Therefore, for a given Lyapunov function $L(\Theta_{m})$, when $E[L(\Theta_{m})]\leq\infty$, the upper bound of the subsequent Lyapunov drift can be expressed as
\begin{align}
	\Delta(\Theta_{m}) &= \mathbb{E}\big[ L(\Theta_{m+1}) - L(\Theta_{m}) \,\big|\, \Theta_{m} \big] \nonumber \\
	&\leq \theta_0 + \sum_{n=1}^N Q_{n,m} \times (E_{n,m}-\bar{E}_n),
\end{align}

The established upper bound of the Lyapunov drift quantifies the impact of current decisions on the long-term constraint, providing a basis for per-frame optimization.

To incorporate the original system objective, we introduce the penalty term in the Lyapunov optimization framework, which represents the system objective that we aim to minimize. 
Therefore, problem $\mathcal{P}1$ can be equivalently transformed into

\begin{align}
	\max _{\boldsymbol{s}_m,\boldsymbol{\omega}_m,\boldsymbol{p}_m} &\quad V \times \mathrm{A}_{m} - \Delta(\Theta_m)   \nonumber\\
	\text { s.t. } &\quad\text{(11c), (11d) and (11e)}, \label{4a}
\end{align}

where $V$ is the Lyapunov control parameter.

By decomposing the optimization into slot-wise problems, (\ref{4a}) is transformed into problem $\mathcal{P}1.1$, thus completing the proof.

\section{Poof of Lemma 2 }

Once $s_{n,m}$ and $\omega_{n,m}$ is fixed, problem $\mathcal{P}1.2$ is reduced into the following resource allocation problem:
\begin{align}
	\max _{\boldsymbol{\omega}_m,\tilde{\boldsymbol{p}}_m} &\quad V \sum_{n=1}^N\widehat{\mathrm{A}}_{n,m}({\beta}_{n,m}) - \sum_{n=1}^N Q_{n,m}\tilde{E}_{n,m}  \nonumber\\
	\text { s.t. } &\quad \text{(11c), (11e)}. \label{5}
\end{align}


Latter term $-Q_{n,m}\tilde{E}_{n,m}$ is linear in $\tilde{p}_{n,m}$, and is therefore concave. Term $\widehat{\mathrm{A}}_{n,m}(\beta_{n,m})$ is a concave and non-decreasing function of the transmission ratio $\beta_{n,m}$. For a fixed $\omega_{n,m}$, $\beta_{n,m}$ is:
$$ \beta_{n,m}(\tilde{p}_{n,m}) = C_1 \cdot \log(1 + C_2 \tilde{p}_{n,m}), $$
where $C_1 = \frac{\omega_{n,m} T}{b_\text{total}(s_{n,m})DL^\text{h}_sL^\textbf{w}_s} > 0$ and $C_2 = \frac{h_{n,m}}{\sigma^{2}} > 0$ are constants. The second derivative is 
$$ \frac{\partial^2 \beta_{n,m}}{\partial \tilde{p}_{n,m}^2} = C_1 \cdot \left( \frac{-C_2^2}{\ln(2)(1 + C_2 \tilde{p}_{n,m})^2} \right). $$

Since $C_1, C_2 > 0$, we have $\frac{\partial^2 \beta_{n,m}}{\partial \tilde{p}_{n,m}^2} < 0$ for all $\tilde{p}_{n,m} > 0$.
So $\beta_{n,m}(\tilde{p}_{n,m})$ is a concave function.

The term $\widehat{\mathrm{A}}_{n,m}(\beta_{n,m})$ is the composition of a concave, non-decreasing function and a concave function $\beta_{n,m}$. By composition rules, $\widehat{\mathrm{A}}_{n,m}(\beta_{n,m})$ is concave. Since $V > 0$, $V\widehat{\mathrm{A}}_{n,m}$ is also concave. Therefore, (\ref{5}) is the sum of two concave terms and is therefore concave. The total sum objective is also concave, thus complete the proof.

\section{Proof of Concavity of Problem P2.2}

Let $p = p_{n,m,k}$ be the optimization variable. The objective function $f(p)$ can be written as:
$$ f(p) = K_1 \cdot \log_2(1 + K_2 p) - q_{n,m,k} \cdot p, $$
where $K_1\! =\! v \!\left( \!\frac{\omega_{n,m} T}{D L_s^h L_s^w}\! \right)$ and $K_2 \!= \!\frac{h_{n,m,k}}{\sigma^2}$. Since $v, \omega_{n,m}, T, D, L_s^h, L_s^w, h_{n,m,k}, \sigma^2$ are all positive quantities, $K_1 \!>\! 0$ and $K_2 \!> \!0$.

To check for concavity, we compute the second derivative of $f(p)$ with respect to $p$.
The second derivative $f''(p)$ is:
$$ f''(p) = - \frac{K_1 K_2^2}{\ln(2) (1 + K_2 p)^2}. $$

Since $K_1 > 0$, $K_2 > 0$, $\ln(2) > 0$, and $(1 + K_2 p)^2 > 0$ for all $p > 0$, the numerator is strictly positive and the denominator is strictly positive. Therefore, $f''(p) < 0$ for the entire domain $p > 0$, therefore the objective function $f(p_{n,m,k})$ is strictly concave, thus complete the proof.

\section{Proof of Theorem 1}
\label{sec:appendix_performance_analysis}

We define the Lyapunov function as $L(m)\triangleq\sum_{n=1}^N\frac{1}{2}Q_{n,m}^2$, and the Lyapunov drft as $\Delta_1(m)\triangleq L(m+1)-L(m)$.
The upper bound on the single-round drift-plus-penalty function is given by
\begin{equation} \label{eq:42}
	\setlength\abovedisplayskip{5pt}
	\setlength\belowdisplayskip{5pt}
	\begin{aligned} 
		\Delta_{1}(m) &= L(m+1)-L(m) = \sum_{n=1}^{N}\left(\frac{1}{2}Q_{n,m+1}^{2}-\frac{1}{2}Q_{n,m}^{2}\right) \le \theta_0+\sum_{n=1}^{N}Q_{n,m}\left(E_{n,m}-\bar{E}\right),
	\end{aligned}
\end{equation}
where $\theta_0\triangleq \sum_{n=1}^N \frac{1}{2}\theta_n^2$ and $\theta_n \triangleq \max_m \{|E_{n,m}-\bar{E}_n| \}$. By adding $-V \mathrm{A}_m$ on both sides of (\ref{eq:42}) and define $\xi_m\triangleq \widehat{\mathrm{A}}_m - \mathrm{A}_{m}$, an upperbound on the single-frame drift-plus-penalty function is given by:
\begin{align}
	\Delta_{1}(m)-V\mathrm{A}_{m} &\le \theta_0+\sum_{n=1}^{N}Q_{n,m}\left(E_{n,m}-\bar{E}\right)-V\mathrm{A}_{m} \label{classical}\\
	&=\theta_0+\sum_{n=1}^{N}Q_{n,m}\left(\tilde{E}_{n,m}-\delta_{n,m}-\bar{E}\right)-V\left(\widehat{\mathrm{A}}_{m}-\xi_m\right).\label{estimated}
\end{align}

The classical Lyapunov drift-plus-penalty algorithm is designed to minimize the upper bound of $\Delta_{1}(m) - V \mathrm{A}m$, as expressed in (\ref{classical}). Since the exact values of $E{n,m}$ and $\mathrm{A}_m$ are unavailable, we instead focus on minimizing the estimated drift-plus-penalty, as presented in (\ref{estimated}).

Define the $M$-frame drift as $\Delta_{M} \triangleq L(M+1)-L(1)$. Then the M-frame drift-plus-penalty function can be bounded by:

\begin{align}
	\Delta_{M} -V\sum_{m=1}^{M}\mathrm{A}_{m} &\le \sum_{m=1}^{M}\left(\theta_0+\sum_{n=1}^{N}Q_{n,m}(\tilde{E}_{n,m}-\delta_{n,m}-\bar{E}_n)\right)-V\sum_{m=1}^{M}\mathrm{A}_{m} \nonumber \\
	&= \theta_0M+\sum_{m=1}^{M}\left(\sum_{n=1}^{N}Q_{n,m}\left(\tilde{E}_{n,m}-\bar{E}_{n}\right)-V\mathrm{A}_{m}-\sum_{n=1}^{N}Q_{n,m}\delta_{n,m}\right) \nonumber\\
	&= \theta_0M+\sum_{m=1}^{M}\left(\sum_{n=1}^{N}Q_{n,m}\left(\tilde{E}_{n,m}-\bar{E}_{n}\right)-V\left(\widehat{\mathrm{A}}_{m}-\xi_m\right)-\sum_{n=1}^{N}Q_{n,m}\delta_{n,m}\right).
\end{align}

We denote the optimal offline solution of $\mathcal{P}1$ with superscript $^*$, the solution obtained by classical drift-plus-penalty method for $\mathcal{P}1.1$ with superscript $^\dagger$, and the result produced by our estimated drift-plus-penalty approach for $\mathcal{P}1.2$ with superscript $^\ddagger$.

The drift-plus-penalty over $M$ frames satisfies the following bound:

\begin{align}
	\Delta_{M}^{\dagger}-V\sum_{m=1}^{M}\mathrm{A}_{m}^{\dagger}  & \le \theta_0 M\sum_{m=1}^{M} \left( \sum_{n=1}^{N} Q_{n,m} \left(\tilde{E}_{n,m}-\bar{E}_{n}\right)^\ddagger - V \mathrm{A}_m^\ddagger - \sum_{n=1}^{N} Q_{n,m} \delta_{n,m}^\ddagger \right) \nonumber\\
	& = \theta_0 M+\sum_{m=1}^{M} \left( \sum_{n=1}^{N} Q_{n,m} \left(\tilde{E}_{n,m}-\bar{E}_{n}\right)^\ddagger - V (\widehat{\mathrm{A}}_m^\ddagger - \xi_{n,m}^{\ddagger}) - \sum_{n=1}^{N} Q_{n,m} \delta_{n,m}^\ddagger \right) \nonumber\\
	& \stackrel{(a)}{\le}  \theta_0 M+\sum_{m=1}^{M} \left( \sum_{n=1}^{N} Q_{n,m} \left(\tilde{E}_{n,m}-\bar{E}_{n}\right)^\dagger - V (\widehat{\mathrm{A}}_m^\dagger -\xi_{n,m}^{\ddagger}) - \sum_{n=1}^{N} Q_{n,m} \delta_{n,m}^\ddagger \right) \nonumber\\
	& =  \theta_0 M+\sum_{m=1}^{M} \left( \sum_{n=1}^{N} Q_{n,m} \left( \left({E}_{n,m}-\bar{E}_{n}\right)^\dagger + \delta_{n,m}^\dagger \right) - V\left({\mathrm{A}}_m^\dagger  +\xi_{n,m}^{\dagger}- \xi_{n,m}^{\ddagger}\right)- \sum_{n=1}^{N} Q_{n,m} \delta_{n,m}^\ddagger \right) \nonumber\\
	& = \theta_0 M+\sum_{m=1}^{M} \left( \sum_{n=1}^{N} Q_{n,m} \left({E}_{n,m}-\bar{E}_{n}\right)^\dagger - V \left(\mathrm{A}_m^\dagger +\xi_{n,m}^{\dagger}-\xi_{n,m}^{\ddagger}\right) +\sum_{n=1}^{N} Q_{n,m} \left( \delta_{n,m}^\dagger - \delta_{n,m}^\ddagger \right) \right) \nonumber\\
	&\stackrel{(b)}{\le} \theta_0 M+\sum_{m=1}^{M}\left(\sum_{n=1}^{N}Q_{n,m}\left(E_{n,m}-\bar{E}_n\right)^{*}-V\mathrm{A}_{m}^{*} +2\delta_{0}\sum_{n=1}^{N}Q_{n,m}+ 2\xi_0V\right),\label{8}
\end{align}
where $\delta_0\triangleq \max_{\{n,m\}}\left\{\left|\tilde{E}_{n,m}-E_{n,m} \right|\right\}$ and $\xi_0\triangleq \max_{\{m\}}\left\{\left|\widehat{\mathrm{A}}_m-\mathrm{A}_{m} \right|\right\}$.

Inequality (a) follows from the fact that the optimal solution to $\mathcal{P}1$ attains the minimal objective value $\sum_{n=1}^{N} Q_{n,m}\tilde{E}_{n,m}^{\ddagger} - V \mathrm{A}_m^{\ddagger}$ for each frame $m$. Inequality (b) holds because the drift-plus-penalty procedure minimizes $\sum_{n=1}^{N} Q_{n,m}E_{n,m} - V \mathrm{A}_m$, therefore, substituting the offline optimal policy on the right-hand side can only result in a larger (or equal) value.

Now we bound the right-hand-side of (\ref{8}). Note that $Q_{n,m+1}-Q_{n,m}\le \theta_n, \forall m,n$, and thus
\begin{gather}
	Q_{n,m}=Q_{n,m}-Q_{n,1}=\sum_{\tau=1}^{m-1}\left(Q_{n,\tau+1}-Q_{n,\tau} \right)\leq(m-1)\theta_n, \label{9}\\
	Q_{n,m}\left({E}_{n,m}-\bar{E}_n\right)^*=\left( Q_{n,m}-Q_{n,1}\right)\left({E}_{n,m}-\bar{E}_n\right)^* \le(m-1)\theta_n^2.\label{10}
\end{gather}

Substituting (\ref{9}) and (\ref{10}) into (\ref{8}) yields:
\begin{align}
	\Delta_{M}^{\ddagger}-V\sum_{m=1}^{M}\mathrm{A}_{m}^{\ddagger}  &\le \theta_0 M - V\sum_{m=1}^{M} \mathrm{A}_{m}^{*} +  \sum_{m=1}^{M}\sum_{n=1}^{N}(m+1)\theta_n^2 +2\delta_{0}\sum_{m=1}^{M}\sum_{n=1}^{N}(m-1)\theta_n+ 2\xi_0V \nonumber\\
	&=\theta_0 M - V\sum_{m=1}^{M} \mathrm{A}_{m}^{*} +  \theta_0M(M-1) +M(M-1)\delta_0\sum_{n=1}^N\theta_n+ 2\xi_0V \nonumber\\
	&=-V\sum_{m=1}^{M} \mathrm{A}_{m}^{*}+\theta_0M^2 +M(M-1)\delta_0\sum_{n=1}^N\theta_n+2\xi_0V.\label{11}
\end{align}

Observing that $\Delta_{M}^{\dagger} \ge 0$, the result in (27) of Theorem 1 follows directly from (\ref{11}) after dividing both sides by $V$.


As $\mathrm{A}_{m} \ge 0$, and for $\forall n, \frac{1}{2}Q^2_{n,M+1}\le \Delta_M$, we get
\begin{align}
	\sum_{m=1}^M \left(E_{n,m}-\bar{E}_n \right)\le \sum_{m=1}^M \left( Q_{n,m+1}-Q_{n,m}\right) = Q_{n,M+1}\le \sqrt{2\Delta_M}\le\sqrt{2\theta_0M^2 +2M(M-1)\delta_0\sum_{n=1}^N\theta_n+4\xi_0V}.
\end{align}

Thus eq. (28) in Theorem 1 is proved.

\vfill

\end{document}